\def \v#1{\relax
          \ifmmode{\bf #1\/}\else{$\bf #1\/$}\fi}
\def \M#1{\relax
          \ifmmode{#1}\else{$#1$}\fi}
\def \affr #1{\relax
              \ifmmode{\cal #1}\else{$\cal #1$}\fi}
\def \koord #1{\relax
              \ifmmode{\hbox{$\sf #1$}}\else{$\sf #1$}\fi}
\newcommand{\I}{\hat{\imath}}
\begin{document}

\begin{center}
\LARGE{Analyzing Midpoint Subdivision}
\\[1.5em]
\large{Hartmut Prautzsch\footnote{E-mail address: prautzsch@kit.edu} and Qi Chen\footnote{Corresponding author, E-mail address: qi.chen@kit.edu}}
\\[0.5em]
\large{Karlsruher Institut f\"ur Technologie (KIT)\\
Germany 
}
\end{center}

\begin{abstract}
Midpoint subdivision generalizes the Lane-Riesenfeld algorithm for uniform tensor product splines and can also be applied to non regular meshes. For example, midpoint subdivision of degree $2$ is a specific Doo-Sabin algorithm and midpoint subdivision of degree $3$ is a specific Catmull-Clark algorithm. In 2001, Zorin and Schr\"oder were able to prove $C^1$-continuity for midpoint subdivision surfaces analytically up to degree $9$. Here, we develop general analysis tools to show that the limiting surfaces under midpoint subdivision of any degree $\ge 2$ are $C^1$-continuous at their extraordinary points.
\end{abstract}

\begin{flushleft}
\textbf{Keywords:}
midpoint subdivision; smoothness at extraordinary points; spectral properties of subdivision matrices; characteristic map
\end{flushleft}

\swapnumbers

\theoremstyle{plain}
\newtheorem{DEF}[equation]{Definition}
\newtheorem{SATZ}[equation]{Theorem}
\newtheorem{LEMMA}[equation]{Lemma}
\newtheorem{FOLGERUNG}[equation]{Corollary}
\newtheorem{FIGCaption}[equation]{Figure}
\makeatletter
\def\tagform@#1{\maketag@@@{\ignorespaces#1\unskip\@@italiccorr}}
\makeatother

\numberwithin{equation}{section}
\renewcommand{\theequation}{(\thesection.\arabic{equation})}

\newcommand{\FigureCenter}[4]{
    \begin{figure}[#1]
        \begin{center}
            #3
        \end{center}
        \begin{FIGCaption}\label{#2}
            {#4} 
        \end{FIGCaption}
    \end{figure}
}

\tableofcontents

\section{Introduction}
The midpoint subdivision schemes form a class of subdivision schemes for arbitrary two-manifold meshes. The midpoint subdivision scheme of degree $n \in \mathbb N$ is given by the operator
\[\M M_n = \M A^{n-1} \M R\;,\]
where $\M R$ and $\M A$ are the refinement and averaging operators, respectively. Refining a mesh $\affr N$ by $\M R$ means to connect the center of each face of $\affr N$ with the midpoints of all its edges, which results in the quadrilateral mesh $\M R\affr N$. Averaging $\affr N$ means to connect the centers of all adjacent faces, which results in the mesh $\M A\affr N$, see Figure~\ref{Abb:M2}. 
\FigureCenter{h}{Abb:M2}
  {\includegraphics{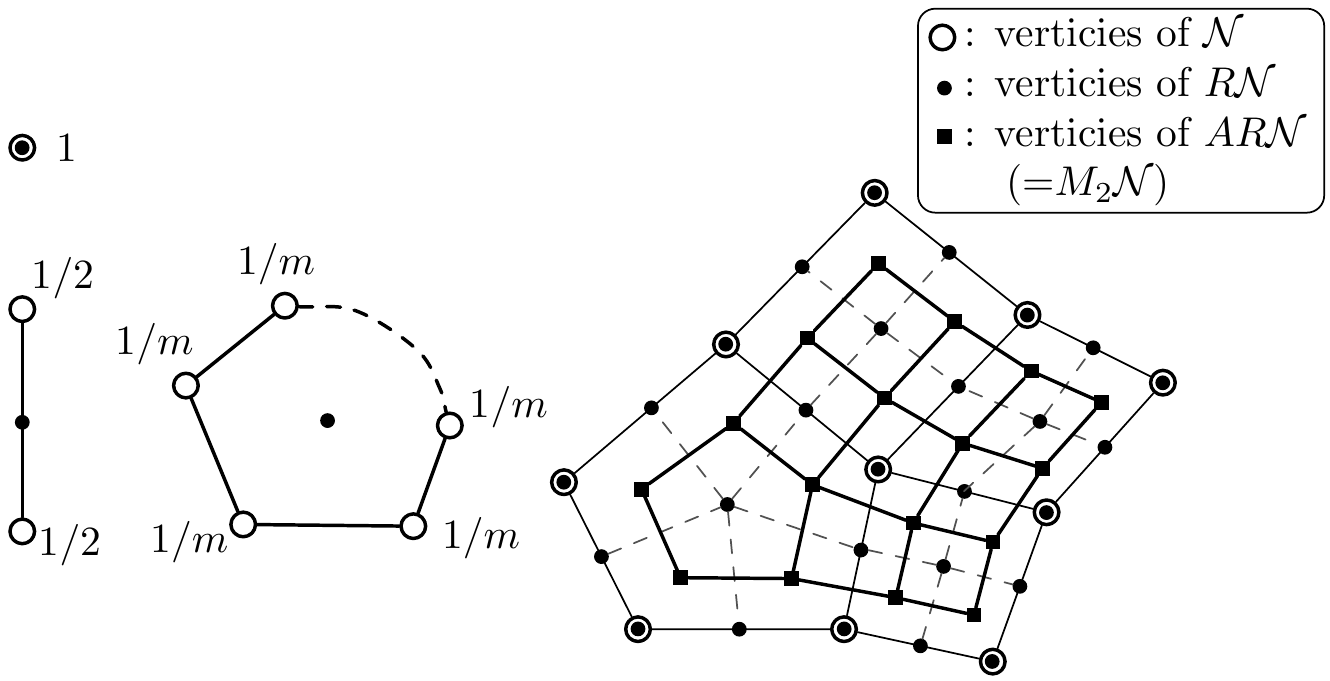}}
  {Masks of $\M R$ (left and middle) and $\M A$ (middle) and an example for $\M M_2 \affr N$ (right).}

The midpoint subdivision schemes of degree $2$, $3$ and $4$ are specific instances of the well-known Doo-Sabin, Catmull-Clark and Qu algorithms \cite{DS78, CC78, Qu90}.

The \emph{valence} $m$ of a vertex or a face is the number of incident or surrounding edges. An inner vertex or face is called \emph{regular} if $m = 4$ and \emph{irregular} or \emph{extraordinary} otherwise.
After subdividing a mesh by $\M M_n$, the mesh has no irregular vertices or faces if $n$ is even or odd, respectively. Subdividing the mesh further does not increase the number of extraordinary vertices and faces and extraordinary elements become more and more isolated. 

Therefore, and since midpoint subdivision generalizes the Lane-Riesenfeld algorithm, midpoint subdivision surfaces are spline surfaces except for finitely many extraordinary points. At these points, the smoothness analysis is complicated. Using Reif's $C^1$-criterion \cite[Theorem 3.6]{Reif95} and interval arithmetic, Zorin and Schr\"oder \cite{ZS2001} showed \mbox{$C^1$-smoothness} for degrees $n=2, \ldots, 9$.

Naturally, this numeric approach is limited to a finite number of degrees. 
Here, we develop a geometric framework that allows us to prove $C^1$-continuity of midpoint subdivision surfaces of all degrees. Moreover, we think that this framework may conceptually be useful for other classes of (simple) subdivision schemes and that it provides -- for the first time -- an analysis toolbox for a complete class of subdivision schemes.

In Section~\ref{SECTION:Rings}, we discuss the basic topological dependencies between the vertices of a mesh before and after a subdivision step. These dependencies correspond to a certain block structure of the subdivision matrix with a block possessing a strictly positive power. This particular block represents the subdivision operator restricted to a certain central part of the mesh, which we call a \emph{core mesh}. In Section~\ref{SECTION:SymmetricRingnets}, definitions and properties of symmetric meshes and symmetric subdivision schemes are introduced.  
For symmetric meshes, in Section~\ref{SECTION:ComparingRingnets}, we define a partial order based on particular coordinate systems. In Section~\ref{SECTION:KonvergenzDerFolge1}, we show that subdividing specific symmetric grid-like core meshes results, in the limit, under normalization in a symmetric mesh whose regular vertices do not coincide with the center and show that it is an eigenvector of the subdivision matrix restricted to core meshes. In Section~\ref{SECTION:Subdominance}, we use the partial ordering introduced in Section~\ref{SECTION:ComparingRingnets} to compare the eigenvector constructed in Section~\ref{SECTION:KonvergenzDerFolge1} to any other eigenvector and show its subdominance. 
Hence, in Sections~\ref{SECTION:KonvergenzDerFolge1} and \ref{SECTION:Subdominance}, we restrict the analysis to core meshes for which the subdivision matrix has a strictly positive power. Then, in Section~\ref{SECTION:LargerRingnets}, we extend the analysis to larger meshes using the particular block structure of the subdivision matrix. In Section~\ref{SECTION:CharacteristicMap}, we show that subdominant eigenvectors define a regular surjective characteristic map, which concludes the proof that generic midpoint subdivision surfaces of any degree $n(\ge 2)$ are $C^1$-continuous. 
This result is stated in Theorem~\ref{SATZ:C1Mn} and the entire paper consists of its proof which is composed of $19$ lemmata, theorems and corollaries whose ramified interdependencies are depicted in Figure~\ref{Abb:Relationship}.

\FigureCenter{!h}{Abb:Relationship}
  {\includegraphics{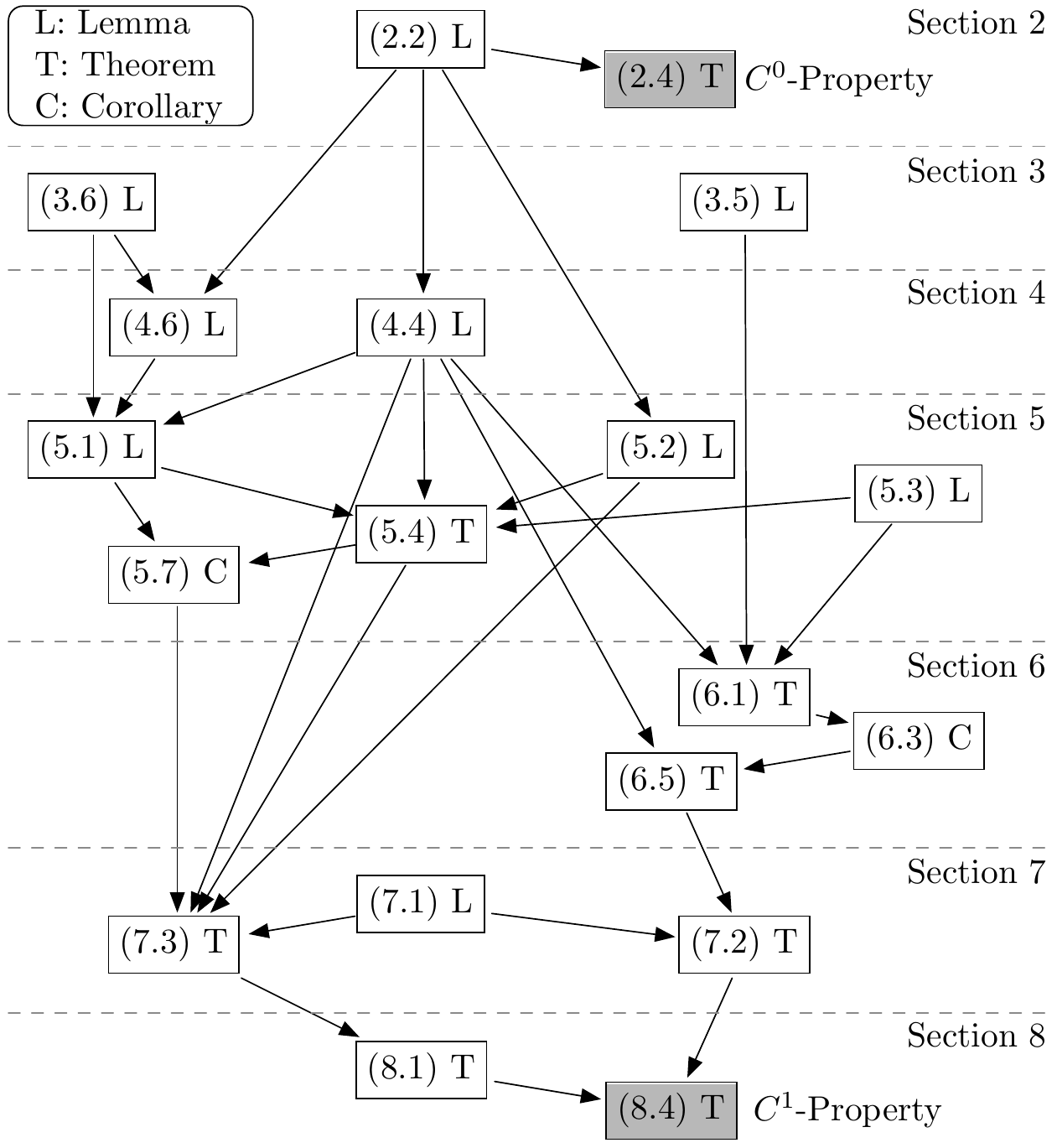}}
  {Interdependencies of lemmata, theorems and corollaries in this paper.}

\section{Rings and ringnets}\label{SECTION:Rings}
For the smoothness analysis of midpoint subdivision surfaces at extraordinary points, it is sufficient for odd $n$ to consider a mesh with only one irregular vertex and for even $n$ to consider a mesh with only one irregular face. We will assume this throughout the paper. These simple meshes are illustrated in Figure~\ref{Abb:BSP_Ring_Netz} and are called \emph{ringnets}.

\FigureCenter{h}{Abb:BSP_Ring_Netz}
  {\includegraphics{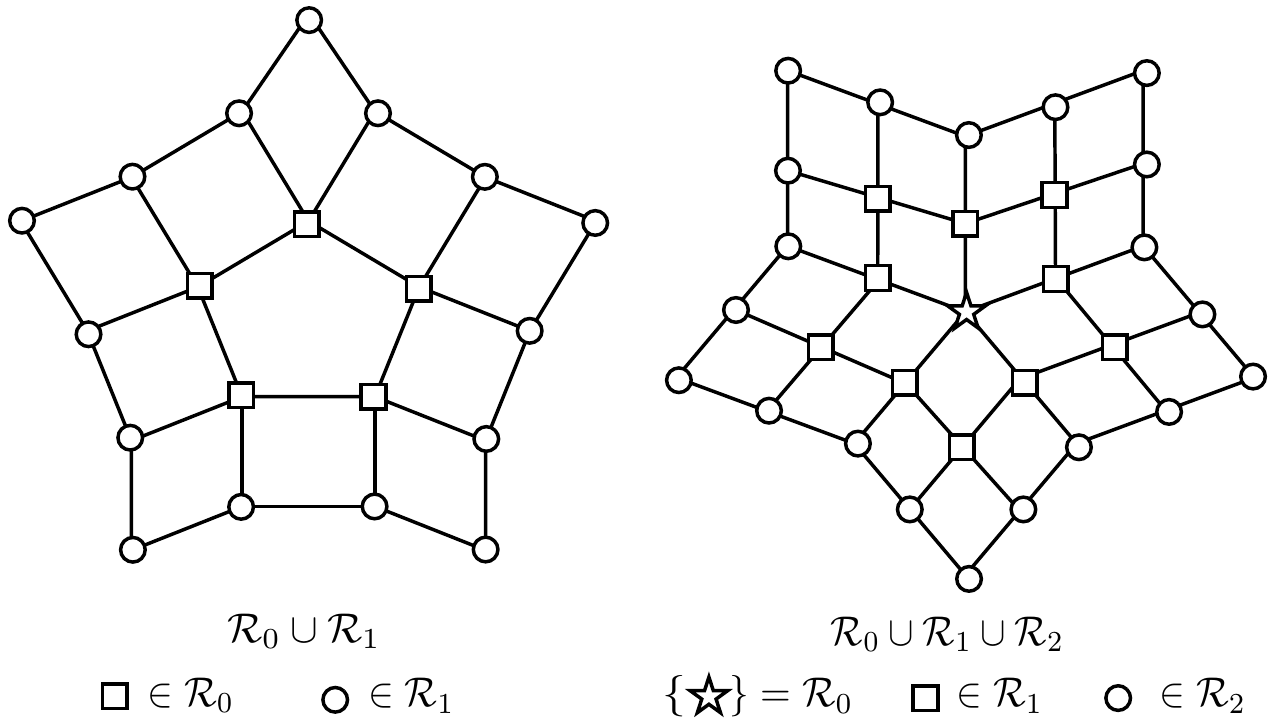}}
  {Example of rings and ringnets: a $1$-ringnet with an irregular face of valence $5$ (left) and  a $2$-ringnet with an extraordinary vertex of valence $5$ (right).}

The $0$-ring of a ringnet is formed by the irregular vertex or the irregular face, the $1$-ring by the vertices of the adjoining faces that do not belong to the $0$-ring. The $2$-ring, $3$-ring, etc. are defined similarly. The rings $0,1,\ldots,l$ constitute an $l$-\emph{ringnet} or an $l$-\emph{net} for short. 
A ringnet $\affr N $ is understood as a vector whose coordinates represent the vertices of the ringnet or as the set of its vertices, depending on the context.
Following other authors, we call a ringnet with an extraordinary vertex \emph{primal} and a ringnet with an extraordinary face \emph{dual}.

In this paper, $n$ always denotes the degree of the midpoint subdivision scheme and $m$ the valence of a vertex or of a face. 
In the following, let $\affr N$ be a sufficiently large ringnet and let $\affr N^{(k)}$ be the subdivided ringnet $\M M_n^k \affr N$. If $\affr N$ is primal we require $n$ to be odd and otherwise to be even.  
The $i$-th ring is denoted by $\affr N_i$, and the subnet built from $\affr N_i, \affr N_{i+1}, \ldots, \affr N_j$ is denoted by $\affr N_{i.. j}$. 
In particular, $\affr N_{0.. i}$ is called an \emph{$i$-ringnet} or \emph{$i$-net} or \emph{$i$-mesh} for short.  

In the rest of this section, we study on which vertices of $\affr N$ the vertices of $\affr N':=\affr N^{(1)}$ depend and derive the minimal size of a ringnet required to examine the eigenvalues of the subdivision matrix.

\begin{LEMMA}\label{Lemma:BeeinflussDerRingeNachEinerUnterteilung}
\textbf{\emph{(The influence of vertices under subdivision)}}\\
We say that $\affr N_i$ \emph{influences}  $\affr N_j'$ and denote this by $\affr N_i \stackrel{\M M_n}{\rightsquigarrow} \affr N_j'$ if every vertex in $\affr N_i$ influences some vertex in $\affr N_j'$ and if additionally all vertices in $\affr N_j'$ depend on $\affr N_i$. This is the case if and only if
\[2i-\left\lfloor \frac{n+1}{2}\right\rfloor \leq j \leq 2i + \left\lceil \frac{n+1}{2}\right\rceil\;.\]
Consequently, $\affr N_{0..j}'$ is completely determined by $\affr N_{0..j}$, i.\,e.,
\[\affr N_{0..j}' = (\M M_n \affr N_{0..j})_{0..j}\]
for all $j\ge \omega :=\left\lfloor \frac{n-1}{2}\right\rfloor$. 
Furthermore, for $i\le\omega$, it follows that  
\[\affr N_i^{(0)} \stackrel{\M M_n}{\rightsquigarrow}
 \affr N_{\omega-1}^{(1)} \stackrel{\M M_n}{\rightsquigarrow}
 \affr N_{\omega-3}^{(2)} \stackrel{\M M_n}{\rightsquigarrow}
 \affr N_{\omega-7}^{(3)} \stackrel{\M M_n}{\rightsquigarrow}
 \cdots \stackrel{\M M_n}{\rightsquigarrow}
 \affr N_{0}^{\left(\left\lceil \log_2 (\omega+1) \right\rceil\right)}\]
and since $\affr N_{0}$ or $(\M R\affr N)_{0}$ consists of only one vertex, every vertex in $\affr N_i$ influences every vertex in $\affr N_{0..\omega}^{(k)}$, where $k > \left\lceil \log_2 (\omega+1) \right\rceil$.
\end{LEMMA}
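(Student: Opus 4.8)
The plan is to decompose $\M M_n=\M A^{n-1}\M R$, first understand how the two elementary operators act on ring indices, and then compose.

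First I would read off the elementary influence relations from the masks of $\M R$ and $\M A$ in Figure~\ref{Abb:M2}. For the refinement $\M R$: since $\M R$ keeps the old vertices and adds edge midpoints and face centroids, and every edge and every face of a ringnet meets at most two consecutive rings, the new vertices split into a ``$2i$-ring'' (kept vertices of $\affr N_i$ and midpoints of circumferential edges of $\affr N_i$) and a ``$(2i{+}1)$-ring'' (midpoints of radial edges between $\affr N_i$ and $\affr N_{i+1}$ and centroids of the faces in between); this yields $\affr N_i\stackrel{\M R}{\rightsquigarrow}(\M R\affr N)_j$ precisely for $j$ in a window of the form $[2i-a,2i+b]$ with small constants $a,b$, and it makes $\M R\affr N$ primal, with $(\M R\affr N)_0$ a single vertex — the parity hypothesis on $n$ is exactly what makes this consistent with the type of $\affr N$. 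For the averaging $\M A$: every new vertex is the centroid of a face of the (now quadrilateral) mesh, hence lies ``between'' two consecutive rings; fixing its ring index to be the smaller of the two — forced by the Doo--Sabin case $\M M_2=\M A\M R$, where the extraordinary face must become the $0$-ring — gives $\affr N_i\stackrel{\M A}{\rightsquigarrow}(\M A\affr N)_j$ for a one-ring window whose offset depends on whether the mesh is primal or dual. In both cases one checks that the windows are \emph{tight} at each end, not mere support bounds: a boundary vertex of the next ring genuinely fails to depend on $\affr N_i$, and some vertex of $\affr N_i$ genuinely reaches nothing outside the window.

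Next I would compose along the chain $\affr N\to\M R\affr N\to\M A\M R\affr N\to\cdots\to\M A^{n-1}\M R\affr N=\affr N'$. The relations $\stackrel{\M R}{\rightsquigarrow}$ and $\stackrel{\M A}{\rightsquigarrow}$ are transitive — both defining clauses, ``every source vertex reaches'' and ``every target vertex depends'', pass through a fully involved intermediate ring — so the composition is legitimate and moreover yields the transitivity of $\stackrel{\M M_n}{\rightsquigarrow}$ needed later. Since $\M R\affr N$ is primal and $\M A$ toggles primal/dual, the one-ring offsets of the $n-1$ successive averagings alternate direction; adding these alternating $\pm1$'s to the $\M R$-window, with $\lceil\tfrac{n-1}{2}\rceil$ of them enlarging it on one side and $\lfloor\tfrac{n-1}{2}\rfloor$ on the other, and using $1+\lfloor\tfrac{n-1}{2}\rfloor=\lfloor\tfrac{n+1}{2}\rfloor$, $1+\lceil\tfrac{n-1}{2}\rceil=\lceil\tfrac{n+1}{2}\rceil$, produces exactly $2i-\lfloor\tfrac{n+1}{2}\rfloor\le j\le 2i+\lceil\tfrac{n+1}{2}\rceil$ for every $n\ge 2$. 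I expect this alternation bookkeeping — keeping the primal and dual ring-index conventions mutually consistent along the whole chain and verifying the asymmetric split in both parity cases — to be the main obstacle; everything else is formal once the elementary windows for $\M R$ and $\M A$ are pinned down.

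Finally the two consequences. The subnet identity $\affr N_{0..j}'=(\M M_n\affr N_{0..j})_{0..j}$ follows by reading the inequality the other way: a vertex of $\affr N'$ in a ring $\le j$ can be influenced by $\affr N_i$ only if $2i-\lfloor\tfrac{n+1}{2}\rfloor\le j$, i.e. $i\le\tfrac12(j+\lfloor\tfrac{n+1}{2}\rfloor)$, and for $j\ge\omega=\lfloor\tfrac{n-1}{2}\rfloor$ this forces $i\le j$. For the displayed chain, one checks for $i\le\omega$ that $2i-\lfloor\tfrac{n+1}{2}\rfloor\le\omega-1\le 2i+\lceil\tfrac{n+1}{2}\rceil$, so $\affr N_i\stackrel{\M M_n}{\rightsquigarrow}\affr N_{\omega-1}'$; then, using the identity $2\omega-\lfloor\tfrac{n+1}{2}\rfloor=\omega-1$, iterating gives $\affr N_{\omega-(2^k-1)}\stackrel{\M M_n}{\rightsquigarrow}\affr N_{\omega-(2^{k+1}-1)}^{(1)}$, so after $k$ steps one reaches ring $\omega-(2^k-1)$, which is $\le 0$ as soon as $2^k\ge\omega+1$, i.e. after $K:=\lceil\log_2(\omega+1)\rceil$ steps. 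Since then $\affr N_0^{(K)}$ — or $(\M R\affr N^{(K)})_0$ in the dual case — is a single vertex, the upper bound applied with $i=0$ gives $\affr N_0^{(K)}\stackrel{\M M_n}{\rightsquigarrow}\affr N_j^{(K+1)}$ for all $j\le\lceil\tfrac{n+1}{2}\rceil$, hence for all $j\le\omega$; chaining by transitivity, every vertex of $\affr N_{0..\omega}^{(k)}$ depends on that single vertex and therefore on all of $\affr N_i$, for every $k>K$.
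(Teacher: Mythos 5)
Your proposal is correct and follows essentially the same route as the paper: reduce to elementary influence windows for the refinement and averaging operators and compose them, checking tightness at both ends. The only cosmetic difference is bookkeeping --- the paper groups the averagings as $\M A^2$ (a symmetric window $[i-1,i+1]$ that preserves the primal/dual type) and absorbs the leftover single $\M A$ into $\M R$ in the dual case, whereas you track single $\M A$ steps with alternating one-sided windows; both yield the same interval $\bigl[2i-\lfloor\tfrac{n+1}{2}\rfloor,\,2i+\lceil\tfrac{n+1}{2}\rceil\bigr]$.
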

\begin{proof}
{
The lemma can be shown via induction if one observes first that  
\[\affr N_i \stackrel{\M A^2}{\rightsquigarrow} (\M A^2 \affr N)_j\quad \mbox{if and only if}\quad i-1\leq j \leq i+1\;,\]
secondly for a primal $\affr N$ that
\[\affr N_i \stackrel{\M R}{\rightsquigarrow} (\M R \affr N)_j\quad \mbox{if and only if}\quad 2i-1\leq j \leq 2i+1\;,\]
and finally for a dual $\affr N$ that
\[\affr N_i \stackrel{\M A \M R}{\rightsquigarrow} (\M A \M R \affr N)_j\quad \mbox{if and only if}\quad 2i-1\leq j \leq 2i+2\;.\]
}
\end{proof}

The $\omega$-mesh $\affr N_{0..\omega}$ with 
\[\omega = \left\lfloor \frac{n-1}{2}\right\rfloor\]
is called the \emph{core mesh} of $\affr N$ (with respect to $M_n$). According to Lemma~\ref{Lemma:BeeinflussDerRingeNachEinerUnterteilung}, the core mesh of $\affr N$ is the largest sub-ringnet of $\affr N$ with the property that each of its vertices influences every vertex in the core mesh after several iterations of subdivision.

The meshes $\affr N^{(i)} = \M M_n^i \affr N$ converge to a piecewise polynomial surface. Each regular subnet of $(n+1)\times (n+1)$ vertices of $\affr N^{(i)}$ defines a polynomial patch of this surface and all these patches form a spline surface ring $\v s_i$ with an $m$-sided hole. The difference surfaces $\v s_i \backslash \v s_{i-1}$ are smaller spline rings consisting of $3m \left\lfloor \frac{n}{2}\right\rfloor^2$ polynomial patches determined by $\affr N_{0..\rho}^{(i)}$ with
\begin{equation}\label{EQ:RHO}
\rho = \left\lceil \frac{3}{2}n - \frac{3}{2} \right\rceil\;,
\end{equation}
see \cite{Prautzsch98}.
The operator $\M M_n$ restricted to $\rho$-nets is represented by a quadratic matrix $\M S_\rho$, called the \emph{subdivision matrix}.

\begin{SATZ}\label{SATZ:C0Mn}
\textbf{\emph{($C^0$-property of $\M M_n$)}}\\
The subdivision surfaces generated by $\M M_n$ are $C^0$ for all $n\geq 1$.  
\end{SATZ}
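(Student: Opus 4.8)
The plan is to establish $C^0$-continuity by showing that the subdivision surface is the uniform limit of the control meshes $\affr N^{(i)}$ and that consecutive mesh vertices stay close together, so that the limit surface is well-defined and continuous across the $m$-sided hole. The key observation is that $\M M_n = \M A^{n-1}\M R$ is built from refinement and averaging operators, each of which is an affine-invariant local averaging process: every vertex of $\M M_n\affr N$ is a convex combination of nearby vertices of $\affr N$. Consequently the mesh diameters of the patches shrink geometrically, and the standard argument for convergence of stationary subdivision (as in the Lane--Riesenfeld / corner-cutting setting) applies away from the extraordinary element. The only place continuity is in question is at the single extraordinary point, where finitely many spline rings $\v s_i$ accumulate.

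First I would argue that for $n\ge 1$ the operators $\M R$ and $\M A$ both contract: $\M R$ replaces faces by their centroids and edge midpoints, and $\M A$ replaces the mesh by centroids of adjacent faces, so in any fixed local chart the edge lengths of $\M M_n\affr N_{0..\rho}$ are bounded by a constant $\lambda<1$ times those of $\affr N_{0..\rho}$ (this is a finite linear-algebra fact about the nonnegative, row-stochastic matrix $\M S_\rho$ acting on difference vectors, and one may invoke that $\M S_\rho$ has $1$ as a simple eigenvalue with the all-ones eigenvector since $\M M_n$ reproduces constants). Then the spline rings $\v s_i$ form a Cauchy sequence in the sup norm on the punctured neighbourhood of the extraordinary point: $\|\v s_{i+1}-\v s_i\|_\infty \le C\lambda^i$, because $\v s_i\setminus\v s_{i-1}$ is determined by $\affr N_{0..\rho}^{(i)}$ whose diameter is $O(\lambda^i)$. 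Hence the $\v s_i$ converge uniformly to a limit surface, and the extraordinary point itself is the common limit of the $0$-rings $\affr N_0^{(i)}$, which also converges since its successive displacements are $O(\lambda^i)$.

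Finally I would check that the pieces fit together continuously: each $\v s_i$ is a genuine $C^0$ (indeed spline) surface ring by the regular Lane--Riesenfeld theory, the rings are nested and share boundaries, and the uniform limit of a nested family of continuous maps that agree on overlaps is continuous; adjoining the single limit point $\lim \affr N_0^{(i)}$ and using $\|\v s_i - \v s_{i-1}\|_\infty\to 0$ gives continuity there as well. I expect the main obstacle to be the clean justification of the geometric contraction factor $\lambda<1$ uniformly over all valences $m$ and all degrees $n$ — i.e.\ verifying that no eigenvalue of $\M S_\rho$ other than the trivial $1$ has modulus $\ge 1$, or, to avoid spectral estimates entirely, exhibiting $\M M_n$ as a composition of corner-cutting steps each with an explicit contraction bound; the rest is a routine packaging of the standard convergence argument for affine-invariant local subdivision.
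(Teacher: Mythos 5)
Your overall strategy is the same as the paper's: reduce $C^0$-continuity to showing that the difference rings $\v s_i\setminus\v s_{i-1}$ shrink to a point, which in turn follows once one knows that $1$ is a simple, strictly dominant eigenvalue of the stochastic matrix $\M S_\rho$. But the one step that actually carries the proof is exactly the step you leave open. That ``$\M M_n$ reproduces constants'' only tells you that $[1\,\ldots\,1]^{\mathrm t}$ is an eigenvector with eigenvalue $1$; it gives neither simplicity of that eigenvalue nor the absence of other eigenvalues of modulus $\ge 1$. Likewise, ``every new vertex is a convex combination of old vertices'' does not by itself force edge lengths or diameters to contract by a factor $\lambda<1$ --- the identity map is a convex combination too, and near the extraordinary vertex or face the masks are not the regular Lane--Riesenfeld masks, so the regular-grid corner-cutting bounds you allude to do not apply there. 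You candidly flag this as ``the main obstacle,'' but you do not close it, and without it the Cauchy estimate $\Vert \v s_{i+1}-\v s_i\Vert_\infty \le C\lambda^i$ has no justification.

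The paper closes this gap with a short combinatorial argument that you are missing: by Lemma~\ref{Lemma:BeeinflussDerRingeNachEinerUnterteilung}, every vertex of the $0$-ring influences every vertex of $\M S_\rho^2\,\affr N_{0..\rho}$, so the stochastic matrix $\M S_\rho^2$ has a strictly positive column. A stochastic matrix with a power having a positive column satisfies $\M S_\rho^i\,\v c \to$ a multiple of $[1\,\ldots\,1]^{\mathrm t}$ for every real $\v c$ (the cited Micchelli--Prautzsch theorem), which yields at once that $1$ is the unique dominant eigenvalue and has algebraic multiplicity $1$; contractivity of the difference rings then follows from Reif's Theorem~3.2. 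If you want to repair your write-up, the influence/positive-column observation is the ingredient to add; everything else in your proposal is a correct but routine packaging around it.
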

\begin{proof}
{
Since the subdivision matrix $\M S_\rho$ is stochastic, $1$ is the dominant eigenvalue of $\M S_\rho$.  
Using Lemma~\ref{Lemma:BeeinflussDerRingeNachEinerUnterteilung}, we obtain that every vertex in $\affr N_{0}$ influences all vertices in $\M S_\rho^2 \, \affr N_{0..\rho}$.
This implies that $\M S_\rho^2$ has a positive column.
According to \cite[Theorem 2.1]{MP89a}, the sequence $(\M S_\rho^i \, \v c)$ converges to a multiple of the vector $[1\,\ldots\, 1]^\mathrm{t}$ as $i \to \infty$ for all real vectors $\v c$.
Therefore, the only dominant eigenvalue of $\M S_\rho$ is $1$ and has algebraic multiplicity $1$.

Hence, the difference surfaces $\v s_i \backslash \v s_{i-1}$ converge to a point, see \cite[Theorem 3.2]{Reif95} and \cite[Remark 35 on Page 35]{Chen2005}, from which it follows that the surfaces generated by $\M M_n$ are continuous.
}
\end{proof}

\section{Symmetric ringnets}\label{SECTION:SymmetricRingnets}
In this section, we introduce grid meshes and show that reflection symmetric eigennets of $\M M_n$ have real eigenvalues.

\begin{DEF}\label{DEF:Netzstruktur}
\textbf{\emph{(Grid mesh)}}\\
A \emph{primal grid mesh} of valence~$m$ and frequency~$f$ is a  planar primal ringnet with the complex vertices  
\[\v g_{ij}^l = i e^{\I 2 \pi lf/m} + j e^{\I 2 \pi (l+1)f/m}, \quad i,j\geq 0,\; l\in \mathbb Z_m,\; \I = \sqrt{-1}\;.\]
A \emph{dual grid mesh} of valence~$m$ and frequency~$f$ consists of the vertices
\[\v h_{ij}^l = \frac{1}{4}(\v g_{i-1,j-1}^{l} + \v g_{i,j-1}^{l} + \v g_{i-1,j}^{l} + \v g_{i,j}^{l}), \quad i,j\geq 1,\; l\in \mathbb Z_m\;,\]
see Figure~\ref{Abb:SterngitterStruktur}.  
For $l$ fixed, the vertices $\v g_{ij}^l$ or $\v h_{ij}^l$ with $(i,j)\neq (0,0)$ of a grid mesh $\affr N$ build the $l$-th \emph{segment} of $\affr N$. The vertices $\v g_{ij}^0$ or $\v h_{ij}^0$ with $i\geq j$ and $(i,j) \neq (0,0)$ constitute the \emph{first half segment} of $\affr N$, denoted by $H(\affr N)$.
We call the vertices $\v g_{0j}^{l} = \v g_{j0}^{l+1}$, $\v g_{ii}^{l}$ and $\v h_{ii}^{l}$ with $i, j \geq 1$ \emph{spoke vertices} and call the vertices $\v g_{ij}^{l}$ or $\v h_{ij}^{l}$ with $i\neq j$ and $i,j\geq 1$ \emph{inner vertices} of the $l$-th segment.
The \emph{segment angle} of  $\affr N$  is $\varphi = 2 \pi f/m$.
The half-line from the center $\v g_{00}^{l}$ through $\v g_{10}^{l}$ or through $\v g_{11}^{l}$ is called the $l$-th \emph{spoke}  or the $(l+\frac{1}{2})$-th \emph{spoke}, denoted by $S_l(\affr N)$ and $S_{l+\frac{1}{2}}(\affr N)$, or $S_l$ and $S_{l+\frac{1}{2}}$ for short, see Figure~\ref{Abb:NetzSpeiche}.
\end{DEF}

\FigureCenter{h}{Abb:SterngitterStruktur}
  {\includegraphics{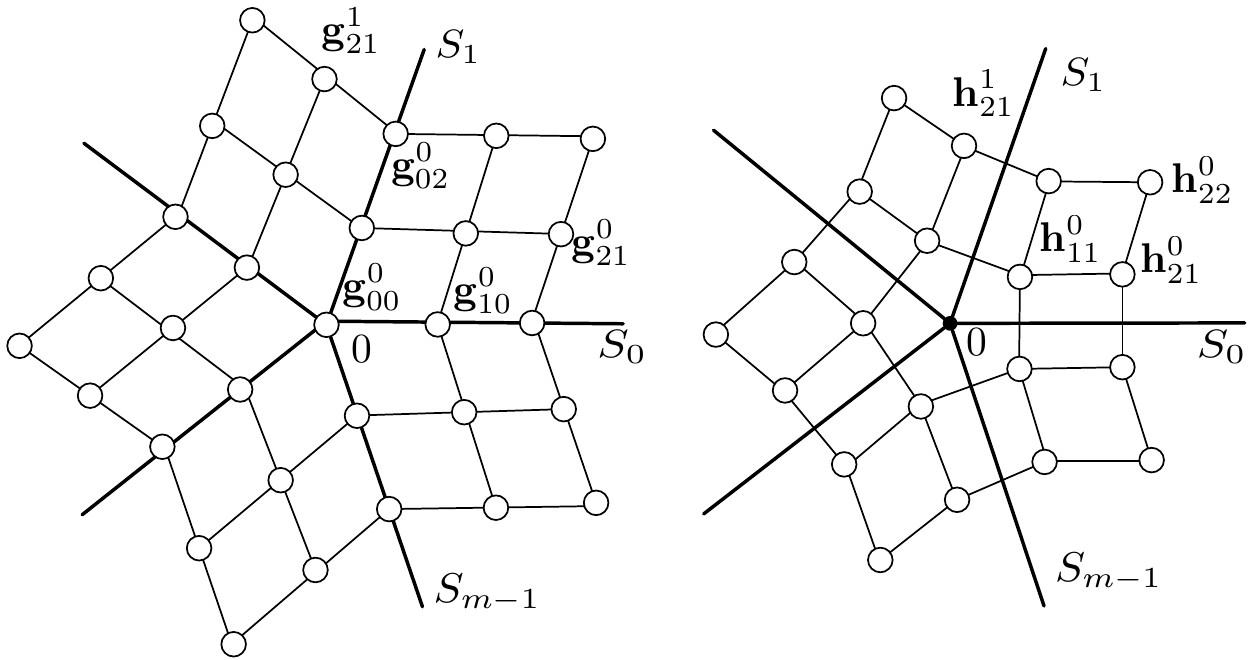}}
  {Primal grid mesh (left) and dual grid mesh (right).}

\FigureCenter{!h}{Abb:NetzSpeiche}
  {\includegraphics{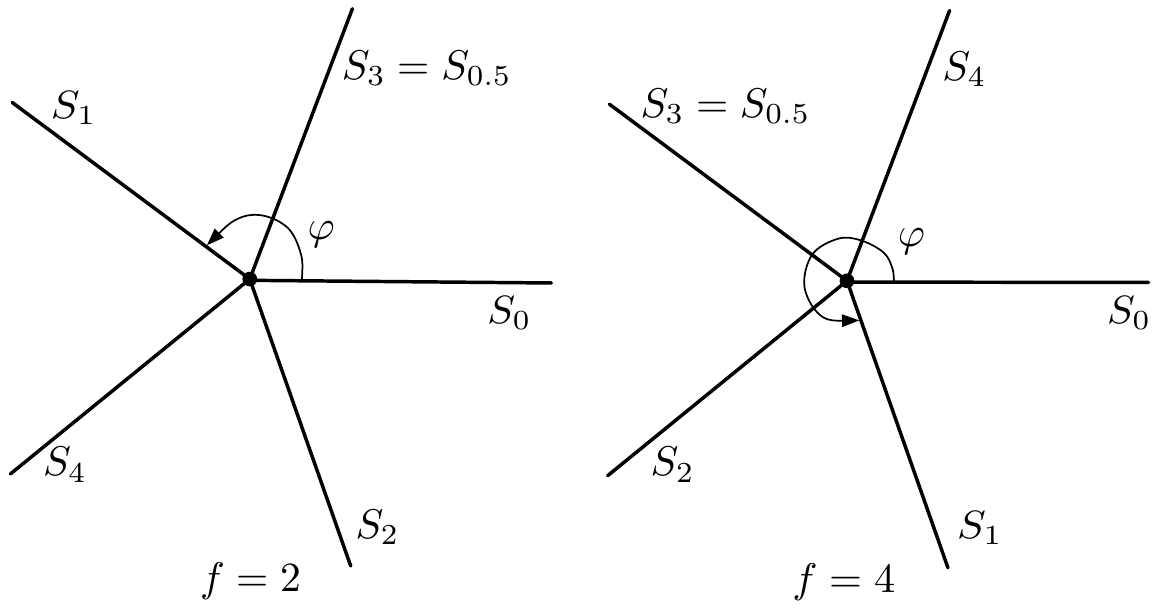}}
  {The spokes of a ringnet with $m=5$ and frequency $f=2$ (left) and $f=4$ (right). The segment angles are $\frac{4}{5}\pi$ (left) and $\frac{8}{5}\pi (\equiv -\frac{2}{5}\pi$) (right).}

Topologically, any ringnet $\affr M$ is equivalent to a grid mesh $\affr N$. Therefore we use the same indices for equivalent vertices and denote the vertices of $\affr M$ by $\v p_{ij}^l$.  

\begin{DEF}
\textbf{\emph{(Symmetric ringnet)}}\\
A ringnet of valence~$m$ with the vertices $\v p_{ij}^l$ in $\mathbb C$ is called \emph{rotation symmetric} with frequency $f$, if
\[\v p_{ij}^l e^{\I 2 \pi f/m} = \v p_{ij}^{l+1}\;.\]
A ringnet $\affr N$ in the complex plane is called \emph{reflection symmetric} if its permutation $\widetilde{\affr N}$ consisting of the points $\widetilde{\v p_{ij}^l} = \v p_{ji}^{(m-1)-l}$ equals the conjugate ringnet $\overline{\affr N}$, i.\,e.,  
\[\widetilde{\affr N} = \overline{\affr N}\;.\]
A rotation and reflection symmetric ringnet is called \emph{symmetric}.
\end{DEF}

It is easy to see that a symmetric ringnet is geometrically symmetric with respect to each spoke. 
Hence, we obtain
\begin{LEMMA}\label{LEMMA:EigenwertVSymmEigennetz}
\textbf{\emph{(Eigenvalue of a reflection symmetric eigennet)}}\\
If $\affr N$ is a reflection symmetric eigennet of $\M M_n$ with eigenvalue $\lambda$ and with $j+1$ rings, $j\ge \omega$, i.\,e., $(\M M_n \affr N)_{0..j} = \lambda \affr N$, then $\lambda$ is real.
\end{LEMMA}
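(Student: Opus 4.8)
The plan is to derive the reality of $\lambda$ from two elementary commutation properties of $\M M_n$ together with the defining identity $\widetilde{\affr N}=\overline{\affr N}$. First I would record that, since the masks of $\M R$ and $\M A$, and hence of $\M M_n$, have real entries, $\M M_n$ commutes with complex conjugation; and since conjugation acts coordinatewise, it still commutes with the restriction to the first $j+1$ rings, so $(\M M_n\overline{\affr N})_{0..j}=\overline{(\M M_n\affr N)_{0..j}}$. Secondly I would note that midpoint subdivision is combinatorially symmetric: the masks in Figure~\ref{Abb:M2} do not distinguish a configuration from its mirror image, so subdividing commutes with the reflection relabelling $\widetilde{(\cdot)}$. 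Because $\widetilde{(\cdot)}$ sends the coordinate $\v p_{ij}^l$ to $\v p_{ji}^{(m-1)-l}$, it merely permutes the coordinates of a $j$-net among themselves -- in particular it preserves every ring index -- so it is a $\mathbb C$-linear automorphism of the space of $j$-nets commuting with $\M M_n$, i.e.\ $(\M M_n\widetilde{\affr N})_{0..j}=\widetilde{(\M M_n\affr N)_{0..j}}$. Throughout, the hypothesis $j\ge\omega$ is exactly what makes these restrictions well-defined endomorphisms of the space of $j$-nets, by Lemma~\ref{Lemma:BeeinflussDerRingeNachEinerUnterteilung}.

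Granting this, the argument is short. Applying $\widetilde{(\cdot)}$ to the eigenequation $(\M M_n\affr N)_{0..j}=\lambda\affr N$ and using its $\mathbb C$-linearity gives $(\M M_n\widetilde{\affr N})_{0..j}=\lambda\,\widetilde{\affr N}$, so $\widetilde{\affr N}$ is an eigennet for $\lambda$; conjugating the same equation gives $(\M M_n\overline{\affr N})_{0..j}=\bar\lambda\,\overline{\affr N}$, so $\overline{\affr N}$ is an eigennet for $\bar\lambda$. By reflection symmetry these two eigennets coincide, whence $(\lambda-\bar\lambda)\,\overline{\affr N}=0$; since an eigennet is a nonzero vector, $\lambda=\bar\lambda$, i.e.\ $\lambda$ is real.

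I expect the only genuine work to be the second commutation property: stating precisely that midpoint subdivision, assembled from the face-centre/edge-midpoint step $\M R$ and the face-averaging step $\M A$, is invariant under the reflection relabelling $\widetilde{(\cdot)}$, and checking that $\widetilde{(\cdot)}$ really does respect the ring structure so that it acts on $j$-nets at all. This is exactly the notion of a \emph{symmetric subdivision scheme}, and once it is set up, everything else -- realness of the coefficients, the coordinatewise action of conjugation, well-definedness of the restriction for $j\ge\omega$, and the concluding two-line computation -- is routine linear algebra.
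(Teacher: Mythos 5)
Your proposal is correct and follows essentially the same route as the paper: both arguments rest on $\M M_n$ commuting with complex conjugation (real masks) and with the reflection relabelling $\widetilde{(\cdot)}$, combined with the identity $\widetilde{\affr N}=\overline{\affr N}$ and the fact that an eigennet is nonzero. The paper compresses this into a single chain of equalities, $\lambda \affr N = (\M M_n \widetilde{\overline{\affr N}})_{0..j} = \widetilde{(\M M_n \overline{\affr N})_{0..j}} = \overline{\lambda}\affr N$, whereas you make the two commutation properties explicit as separate lemmas, but the content is identical.
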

\begin{proof}
{
Since
\[\lambda \affr N = (\M M_n \affr N)_{0..j} = (\M M_n \widetilde{\overline{\affr N}})_{0..j} = (\widetilde{\M M_n \overline{\affr N})_{0..j}} = \widetilde{\overline{\lambda} \; \overline{\affr N}} = \overline{\lambda} \affr N\;,\]
we get $\lambda = \overline{\lambda}$.  
}
\end{proof}

\begin{LEMMA}\label{LEMMA:PosInvariance}
\textbf{\emph{(Positional invariance)}}\\
Let $\affr N$ be a symmetric ringnet with segment angle $\varphi \in (0, \pi]$. If the first half segment $H(\affr N)$ of $\affr N$ lies in the cone $C(\affr N)$ spanned by the spokes $S_0$ and $S_{0.5}$ of $\affr N$, i.\,e.,
\[C(\affr N) = \; \mbox{convex hull of $S_0(\affr N)$ and $S_{0.5}(\affr N)$},\]
then the subdivided ringnets $R \affr N$, $A \affr N$ and consequently $M_n \affr N$ have the same property. By symmetry, the similar statement also holds for all half and, hence, for all full segments.
\end{LEMMA}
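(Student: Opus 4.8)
The plan is to base everything on two elementary properties of $R$ and $A$. First, both are symmetric subdivision schemes (Section~\ref{SECTION:SymmetricRingnets}), so $R\affr N$, $A\affr N$ and $M_n\affr N$ are again symmetric ringnets, carrying the same rotations and reflections as $\affr N$; placing their common center at the origin, their spokes $S_0$ and $S_{0.5}$ -- determined by the center and these symmetries -- point in the same directions as those of $\affr N$, whence $C(R\affr N)=C(A\affr N)=C(M_n\affr N)=C(\affr N)=:C$. Since the segment angle of $\affr N$ is $\varphi\le\pi$, the cone $C$ has opening angle $\varphi/2\le\pi/2$ and is therefore convex. Second, the masks of $R$ and of $A$ use only nonnegative weights summing to one, so every vertex of $R\affr N$ and of $A\affr N$ is a convex combination of vertices of $\affr N$ lying in a single face of $\affr N$.

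The first step would be to prove that $H(R\affr N)\subseteq C$. Let $\v v$ be a vertex of $H(R\affr N)$. If $\v v$ lies on $S_0$ or on $S_{0.5}$, then $\v v\in C$ because these spokes bound $C$. Otherwise $\v v$ lies strictly between the two spokes, and I claim that then every vertex of $\affr N$ on which $\v v$ depends already lies in the first half segment $H(\affr N)$; since $H(\affr N)\subseteq C$ by hypothesis and $C$ is convex, this forces $\v v\in C$. The claim is purely combinatorial, concerning the incidence structure of a grid mesh and hence of $\affr N$: a face or an edge of $\affr N$ that is mapped to itself by the reflection across $S_0$ or across $S_{0.5}$ -- i.e., one straddling that spoke -- produces, since $R$ is symmetric, only new vertices fixed by that reflection, hence lying on that very spoke, so it does not contribute to a vertex strictly between the spokes; and near the irregular vertex or face the only vertices of $\affr N$ incident to the relevant cells are the center ($=0\in C$) and vertices of $H(\affr N)$. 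The identical argument, with the faces of $\affr N$ taking the role of the refined cells, then gives $H(A\affr N)\subseteq C$.

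Since $M_n=A^{n-1}R$, the assertion for $M_n\affr N$ follows by iteration: $R\affr N$ is a symmetric ringnet whose first half segment lies in $C=C(R\affr N)$, and each further application of $A$ preserves symmetry and keeps the first half segment inside the common cone $C$, so after $n-1$ such steps $H(M_n\affr N)\subseteq C(M_n\affr N)$. The corresponding statement for all other half segments, and thus for all full segments, follows from the rotation symmetry of $\affr N$, under which every (half) segment, together with the cone bounding it, is a rotated copy of $H(\affr N)$ together with $C$.

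I expect the combinatorial claim of the second paragraph to be the main obstacle, since it requires matching the masks of $R$ and of $A$ against the re-indexing of the subdivided ringnet along the two spokes $S_0$, $S_{0.5}$ and around the irregular element, and it is precisely at the straddling faces and edges that the reflection symmetries must be invoked to exclude contributions from the complementary half segment.
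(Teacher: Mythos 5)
Your proposal is correct and follows essentially the same route as the paper: the paper's one-sentence proof likewise observes that the vertices of $H(R\affr N)$ and $H(A\affr N)$ are either convex combinations of $H(\affr N)\cup\{\v 0\}$ or, for cells straddling a spoke, forced onto $S_0$ or $S_{0.5}$ because $R$ and $A$ preserve symmetry. Your write-up merely makes explicit the convexity of the cone and the combinatorial case analysis that the paper leaves implicit.
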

\begin{proof}
{
The vertices of $H(R \affr N)$ and $H(A \affr N)$ are convex combinations of $H(\affr N) \cup \{\v 0\}$ or lie on $S_0$ and $S_{0.5}$ since $R$ and $A$ preserve symmetry.
}
\end{proof}

\section{Comparing ringnets}\label{SECTION:ComparingRingnets}
In this section, we partially order symmetric ringnets by comparing the coordinates of all vertices in the first half segment of any two meshes. Since we need that orthogonal projections into the spokes preserve this order, we choose the coordinate axes perpendicular to the spokes. This partial ordering is preserved by midpoint subdivision. Moreover, if $\affr N_1 \gvertneqq \affr N_2$ for two distinct core meshes, sufficiently many subdivision steps lead to the strict inequality $M^k \affr N_1 > M^k\affr N_2$, where $M$ is the operator $\M M_n$ restricted to core meshes, see Lemma~\ref{LEMMA:VergleichbareEigenschaftenZweierSymmetrischenNetzen}. This will be crucial to prove that subdividing a grid mesh leads to a subdominant eigennet and in Lemma~\ref{LEMMA:MIN-MAX-Beziehung}, we show that this eigennet has no zero control points except for its extraordinary point if the net was primal.

The coordinate system $\koord K$ of a planar symmetric ringnet $\affr N$ with segment angle $\varphi \in (0, \; 2\pi)$ has the basis vectors
\[\left[ \begin{array}{c} \cos \theta \\ \sin \theta \end{array} \right] \quad \mbox{and} \quad \left[ \begin{array}{c} 0 \\ 1 \end{array} \right],
\quad \mbox{where} \quad \theta = \varphi/2 - \pi/2\;,\]
which are orthogonal to the spokes $S_{0.5}$ and $S_{0}$, see Figure~\ref{Abb:DKS}.

\FigureCenter{ht}{Abb:DKS}
  {\includegraphics{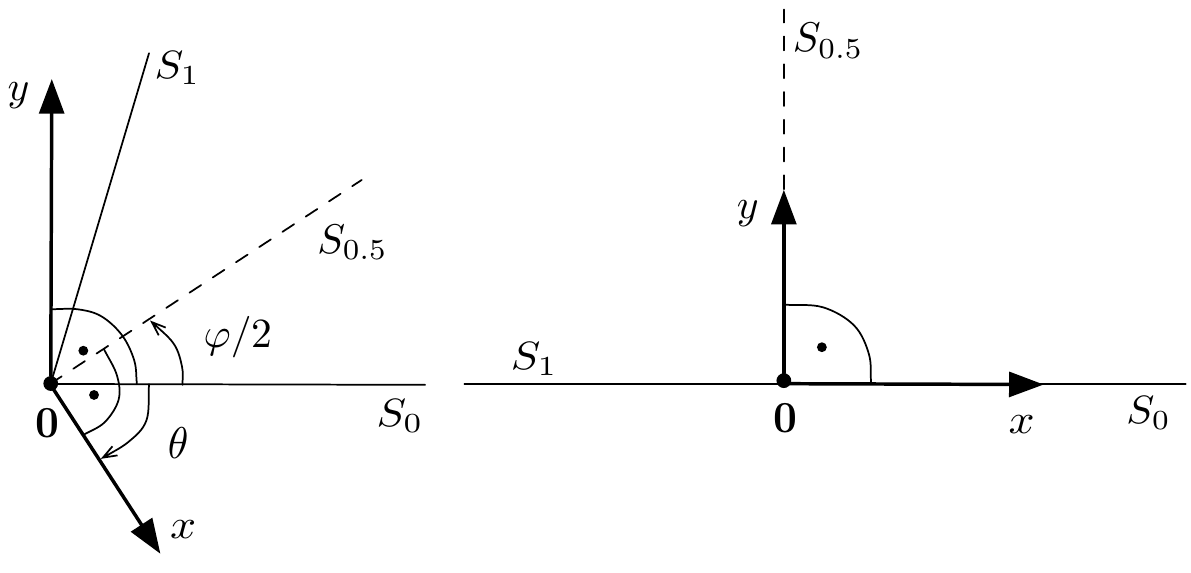}}
  {Examples of $\koord K$ for $\varphi/2 = \measuredangle({S}_0, {S}_{0.5}) \in (0, \frac{\pi}{2})$ (left) and $\varphi/2 = \measuredangle({S}_0, {S}_{0.5}) =\frac{\pi}{2}$ (right).}

The midpoint of two vertices whose positions are symmetric with respect to $S_{0}$ or $S_{0.5}$ is the orthogonal projection of these vertices into $S_{0}$ or $S_{0.5}$, respectively. Thus, the two vertices and their midpoint have the same $x$ or $y$ coordinates in $\koord K$, respectively. This property is crucial in our proofs, see Equation~\ref{EQ:PunkteBez2}, Figures~\ref{Abb:ProjAufSp3} and \ref{Abb:ProjAufSp2}.

A point $\v p$ with coordinates $p_1, p_2$ in $\koord K$ is called \emph{positive} if $p_1>0$ and $p_2 >0$, and it is called \emph{non-negative} if $p_1 \geq 0$ and $p_2 \geq 0$. We denote this by $\v p>\v 0$ and $\v p \geq \v 0$, respectively.

A symmetric ringnet $\affr M$ is called \emph{positive} or \emph{non-negative} if all points of $H(\affr M)$ are positive or non-negative, respectively. This is denoted by $\affr M> \v 0$ or $\affr M \geq \v 0$. 

\begin{DEF}\label{DEF:VergleichNetze}
\textbf{\emph{(Comparison of two symmetric ringnets)}}\\
Let $\affr N_1$ and $\affr N_2$ be two symmetric ringnets with the same topological structure and coordinate system $\koord K$. We call
$\affr N_1 > \affr N_2$ if $H(\affr N_1) > H(\affr N_2)$ holds pointwise in $\koord K$.
The relations $\geq$, $<$, $\leq$ and $=$ between two symmetric ringnets can be defined similarly.
\end{DEF}

In order to characterize the size of a ringnet in $\koord K$, we introduce a norm.

\begin{DEF}
\textbf{\emph{(Norm, $\mathrm{MAX}$, $\mathrm{MIN}$)}}\\
Let $\affr N$ be a symmetric ringnet with coordinate system $\koord K$. Let $\v p$ be a vertex with coordinates $p_1$ and $p_2$. Then we define
\[\Vert \v p \Vert :=\max\{|p_1 |, | p_2 | \}\;,\]
\[\mathrm{MAX}(\affr N) := \Vert \affr N \Vert := \max_{\v q \in H(\affr N)} \Vert \v q \Vert\]
and
\[\mathrm{MIN}(\affr N) := \min_{\v q \in H(\affr N)} \Vert \v q \Vert\;.\]
\end{DEF}

\begin{LEMMA}\label{LEMMA:VergleichbareEigenschaftenZweierSymmetrischenNetzen}
\textbf{\emph{(Comparable properties between two symmetric ring\-nets)}}\\
Let $\affr N_1$ and  $\affr N_2$ be symmetric ringnets with 
segment  angles $\varphi_1\in (0, \pi)$ and \mbox{$\varphi_2 \in [\varphi_1, \pi]$}, respectively, such that 
$H(\affr N_1)$ lies in $C(\affr N_1)$. If we rotate $\affr N_1$ by $(\varphi_2-\varphi_1)/4$ degrees counterclockwise, we obtain the following statements with respect to the coordinate system of $\affr N_2$ under the assumption $\affr N_1 \ge \affr N_2$:
\begin{enumerate}
\item[(a)] $\M M_n \affr N_1 \ge \M M_n \affr N_2$\;.
\item[(b)] If $\v 0 < (\affr N_1)_{0..\omega} \neq (\affr N_2)_{0..\omega}$,
then, for any $j\ge 0$ there exists an $\alpha \in (0, 1)$ such that
\[\alpha  (\M M_n^k \affr N_1)_{0..j} > (\M M_n^k \affr N_2)_{0..j}\]
for all sufficiently large $k$.
\end{enumerate}
\end{LEMMA}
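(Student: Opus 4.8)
The strategy is to obtain the one--step inequality~(a) by analysing the masks of $\M R$ and $\M A$ in the coordinate system $\koord K$ of $\affr N_2$, and then to bootstrap~(a) to the strict estimate~(b) by invoking the primitivity of the subdivision operator on core meshes supplied by Lemma~\ref{Lemma:BeeinflussDerRingeNachEinerUnterteilung}. Throughout, the decisive elementary fact is that each of the two $\koord K$--coordinates is a \emph{linear} functional on the plane, so that it commutes with the convex combinations that constitute $\M R$ and $\M A$; one never needs to project orthogonally.

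For~(a) I would factor $\M M_n=\M A^{n-1}\M R$ and prove monotonicity of $\M R$ and of $\M A$ separately, their composition then being monotone; by Lemma~\ref{LEMMA:PosInvariance} every intermediate net again satisfies the cone condition $H(\cdot)\subseteq C(\cdot)$, so the argument applies at each stage. Fix a vertex $\v v$ of the first half segment of the refined net. By the definitions of $\M R$ (face centres and edge midpoints) and of $\M A$ (face centres), $\v v$ is a convex combination, with combinatorially determined weights, of vertices of the unrefined net lying in the closed sector bounded by $S_0$ and $S_{0.5}$, of their images under the reflections $\sigma_0,\sigma_{0.5}$ in those two spokes, and -- near the extraordinary element -- of the centre $\v 0$; writing $\varphi$ for the segment angle, these reflections act in $\koord K$ by $\sigma_0(p_1,p_2)=(p_1,\,2p_1\cos\tfrac{\varphi}{2}-p_2)$ and $\sigma_{0.5}(p_1,p_2)=(2p_2\cos\tfrac{\varphi}{2}-p_1,\,p_2)$. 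The point of rotating $\affr N_1$ by $(\varphi_2-\varphi_1)/4$ beforehand is that it makes the cone $C(\affr N_1)$ concentric with, and contained in, $C(\affr N_2)$; hence every contributing vertex of $\M R\affr N_1$ or $\M A\affr N_1$ coming from the fundamental sector of $\affr N_1$ already has non--negative $\koord K$--coordinates, and each such vertex together with its $\sigma_0,\sigma_{0.5}$--images can be compared, coordinate by coordinate, to the corresponding vertex of $\M R\affr N_2$ or $\M A\affr N_2$ by means of the reflection formulae and the hypothesis $\affr N_1\ge\affr N_2$. Summing the identical weights yields $\v v(\affr N_1)\ge\v v(\affr N_2)$ in $\koord K$, hence $\M M_n\affr N_1\ge\M M_n\affr N_2$. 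The main obstacle is precisely the bookkeeping for the vertices lying on, or just across, the spokes $S_0$ and $S_{0.5}$: since these spokes are inclined differently for $\affr N_1$ and for $\affr N_2$, folding back across the mirror of $\affr N_1$ is not the same operation as folding back across that of $\affr N_2$, and one must check that it cannot reverse the inequality -- this is where the cone containment $C(\affr N_1)\subseteq C(\affr N_2)$ and the specific rotation angle $(\varphi_2-\varphi_1)/4$ are used.

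For~(b), iterating~(a) -- which is legitimate because $\M M_n$ preserves both the valence and the frequency, so the rotation and the choice of $\koord K$ persist under repetition -- gives $\M M_n^k\affr N_1\ge\M M_n^k\affr N_2$ for every $k$. Since $\v 0<(\affr N_1)_{0..\omega}\neq(\affr N_2)_{0..\omega}$, the non--negative difference net is not zero, so at least one $\koord K$--coordinate of $(\affr N_1-\affr N_2)_{0..\omega}$ is strictly positive. By Lemma~\ref{Lemma:BeeinflussDerRingeNachEinerUnterteilung} the operator induced by $\M M_n$ on $\omega$--nets has a strictly positive power; hence after finitely many subdivision steps every $\koord K$--coordinate of $(\M M_n^k\affr N_1-\M M_n^k\affr N_2)_{0..\omega}$ is strictly positive, and -- using again that the core influences all of $\affr N_{0..j}^{(k')}$ after a bounded number of further steps -- the same holds for the whole net up to ring $j$, i.e.\ $(\M M_n^k\affr N_1)_{0..j}>(\M M_n^k\affr N_2)_{0..j}$ for all large $k$. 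It remains to extract an $\alpha\in(0,1)$, valid for all large $k$, with $\alpha\,(\M M_n^k\affr N_1)_{0..j}>(\M M_n^k\affr N_2)_{0..j}$; writing $\v u_k:=(\M M_n^k\affr N_1)_{0..j}$ and $\v d_k:=(\M M_n^k\affr N_1-\M M_n^k\affr N_2)_{0..j}>\v 0$, this amounts to $\alpha\,\v u_k>\v u_k-\v d_k$, i.e.\ to the coordinatewise ratios of $\v d_k$ to $\v u_k$ staying bounded away from $0$ as $k\to\infty$. This is the second delicate point; it follows from the Perron--Frobenius structure of $\M M_n$ on core meshes -- the primitive block forces a uniform contraction, for instance in the Hilbert metric on the positive cone -- which makes those ratios stabilise and thereby furnishes the required $\alpha$.
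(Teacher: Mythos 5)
Your part~(a) is, modulo packaging, the paper's own argument. Since the reflected vertices always enter the convex combinations of $\M R$ and $\M A$ paired with their mirror partners at equal weight (mirror\-/symmetric faces, and in the dual case mirror\-/symmetric edges), ``a vertex together with its $\sigma_0,\sigma_{0.5}$--images'' is exactly the orthogonal projection onto the spoke that the paper works with: $\tfrac12(\v p+\sigma_0\v p)=(p_1,\,p_1\cos\tfrac{\varphi}{2})$. The order preservation of these two projections under the cone containment $C(\affr N_1)\subseteq C(\affr N_2)$ and $\cos(\varphi_1/2)\ge\cos(\varphi_2/2)\ge 0$ is precisely the step the paper settles with Figure~\ref{Abb:ProjAufSp3}. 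One caution: the pairing is essential. Comparing a reflected vertex of $\affr N_1$ to the corresponding reflected vertex of $\affr N_2$ \emph{individually} fails, because of the $-p_2$ term in your formula for $\sigma_0$ (and because the two reflections are taken across differently inclined spokes); only after averaging with the unreflected partner does the problematic term cancel.

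In part~(b) two steps need repair. First, the strictly positive power provided by Lemma~\ref{Lemma:BeeinflussDerRingeNachEinerUnterteilung} is a property of the point\-/level subdivision matrix; the operator that actually acts on the $\koord K$--coordinates of the half segments additionally contains the rank\-/one projections $(x,y)\mapsto(x,\,x\cos\tfrac{\varphi}{2})$ and $(x,y)\mapsto(y\cos\tfrac{\varphi}{2},\,y)$, each of which annihilates one coordinate, so positivity of a power of the former does not by itself yield positivity of a power of the latter. This is why the paper tracks \emph{which} coordinate carries the strict inequality and records the implications ``$x_1>x_2\Rightarrow\v q_1>\v q_2$'' and ``$y_1>y_2\Rightarrow\v r_1>\v r_2$'' (valid because $\varphi_1<\pi$ and $\v p_1>\v 0$), routing the strictness through the spoke vertex $\v p_{11}^0$. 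Second, the Hilbert\-/metric/Perron--Frobenius contraction cannot be invoked for the difference net: the two meshes have different segment angles and are therefore driven by two \emph{different} nonnegative coordinate operators, so there is no single primitive block whose iteration governs the difference, and your ratio bound would in any case only give a per\-/step multiplicative estimate, not a uniform lower bound. It is also unnecessary. Once $(\M M_n^{k_0}\affr N_1)_{0..j}>(\M M_n^{k_0}\affr N_2)_{0..j}$ with the left\-/hand net strictly positive, choose $\alpha_0\in(0,1)$ with $\alpha_0(\M M_n^{k_0}\affr N_1)_{0..j}\ge(\M M_n^{k_0}\affr N_2)_{0..j}$; by linearity of $\M M_n$, Lemma~\ref{LEMMA:PosInvariance} and part~(a) this inequality persists for all $k\ge k_0$, and any $\alpha\in(\alpha_0,1)$ then yields the required strict inequality because $(\M M_n^{k}\affr N_1)_{0..j}>\v 0$. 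This monotone\-/ratio device is the same one the paper later uses for the sequences $(a_k)$ and $(b_k)$ in the proof of Theorem~\ref{SATZ:KonvMk}.
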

\begin{proof}
{
Because $\affr N_i$ is symmetric, $H(\M M_n \affr N_i)$ can be generated from $H(\affr N_i) \cup \{\v 0\}$ by calculating midpoints and by projecting points into the spokes $S_0(\affr N_i)$ and $S_{0.5}(\affr N_i)$. Averaging preserves inequalities and the same holds for ortho\-gonal projections into the spokes as we see for two corresponding vertices
\[\v p_i = [x_{i}\; y_{i}]^\mathrm{t} = \v p_{kj}^0 (\affr N_i), \quad i=1, 2\;.\]
Let $\v q_i$ be their projections into $S_0(\affr N_i)$ and let $\v r_i$ be their projections into $S_{0.5}(\affr N_i)$, see Figure~\ref{Abb:ProjAufSp3}. 
Since $\v p_1 \in C(\affr N_1)$, we obtain
\[
\v p_1 \geq \v p_2 \quad \Rightarrow \quad \v q_1 \geq \v q_2 \;\; \mbox{and} \;\; \v r_1 \geq \v r_2\;,
\]
by which we derive Statement~(a).
\FigureCenter{h}{Abb:ProjAufSp3}
  {\includegraphics{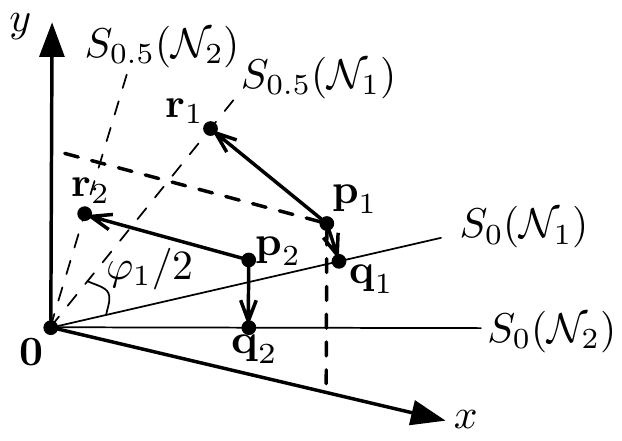}}
  {Projection into the spokes.}

Moreover, since $\varphi_1<\pi$, we get
\[\begin{array}{lll}
\v 0 < \v p_1 \geq \v p_2 \;\; \mbox{and} \;\; x_{1} > x_{2} & \Rightarrow & \v q_1 > \v q_2\;,\\
\v 0 < \v p_1 \geq \v p_2 \;\; \mbox{and} \;\; y_{1} > y_{2} & \Rightarrow & \v r_1 > \v r_2\;,
\end{array}
\] see Figure~\ref{Abb:ProjAufSp3}.
We have assumed $\affr N_1 \ge \affr N_2$ and $\v 0 < \v p_1 \ge \v p_2 \ne \v p_1$ for some points $\v p_i = \v p_{l_1 l_2}^0((\affr N_i)_{0..\omega}), i=1, 2$. According to 
Lemma~\ref{Lemma:BeeinflussDerRingeNachEinerUnterteilung}, the points $\v p_i$ influence all points in $(M_n^k(\affr N_i))_{0..\omega}$ for sufficiently large $k$ and for $i=1, 2$, respectively. 
Hence, $\v 0 < \v s_1 > \v s_2$ for $\v s_i = \v p_{11}^0(M_n^k(\affr N_i))$ and since $\v s_i$ influences all points in $(\M M_n^k \affr N_i)_{0..j}$ for all sufficiently large(r) $k$, we have derived Statement~(b). 
}
\end{proof}
                                       
\begin{LEMMA}\label{LEMMA:MIN-MAX-Beziehung}
\textbf{\emph{($\mathrm{MIN}$-$\mathrm{MAX}$ relation)}}\\
Let 
$\affr M^{(k)} := (\M M_n^k \affr M)_{0..\omega}$
be a subdivided grid mesh with segment angle in $(0, \pi)$. Then there exists a positive $\nu$ such that for all $k \geq 0$
\[\mathrm{MIN}(\affr M^{(k)}) / \mathrm{MAX}(\affr M^{(k)}) \geq \nu\;.\]
\end{LEMMA}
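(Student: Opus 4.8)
The aim is to show that under midpoint subdivision a grid mesh core net $\affr M^{(k)}$ never degenerates, in the sense that the smallest vertex norm in the first half segment stays bounded below by a fixed fraction of the largest. The natural strategy is to decouple the two quantities: first bound $\mathrm{MAX}(\affr M^{(k)})$ from below by something that does not decay faster than a fixed geometric rate relative to the overall scale, and then bound $\mathrm{MIN}(\affr M^{(k)})$ from above by the same scale — or, more directly, to exhibit a single vertex (or a controlled family of vertices) that after each subdivision step keeps both a definite lower bound on its norm and a definite upper bound, with ratios independent of $k$. Because the grid mesh has segment angle $\varphi\in(0,\pi)$, Lemma~\ref{LEMMA:PosInvariance} (positional invariance) applies after checking that $H(\affr M)$ lies in the cone $C(\affr M)$; this guarantees the half segment stays in the cone spanned by $S_0$ and $S_{0.5}$ for all $k$, so all coordinates in $\koord K$ remain non-negative and no cancellation can occur in the convex-combination/projection description of $\M M_n$ used in the proof of Lemma~\ref{LEMMA:VergleichbareEigenschaftenZweierSymmetrischenNetzen}.

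The key steps, in order, would be: (1) Normalise so that $\mathrm{MAX}(\affr M^{(k)})$ is comparable to a fixed scale — e.g. work with $\affr M^{(k)}/\mathrm{MAX}(\affr M^{(k)})$ — and observe that by positional invariance and stochasticity these normalised nets live in a compact set of non-negative symmetric core nets with segment angle $\varphi$. (2) Track the vertex $\v p_{11}^0$ (or the spoke vertices $\v p_{10}^0$, $\v p_{11}^0$): using Lemma~\ref{Lemma:BeeinflussDerRingeNachEinerUnterteilung}, every vertex of $\affr M^{(k)}$ influences these central vertices after a bounded number $K_0=\lceil\log_2(\omega+1)\rceil+1$ of further steps, and the subdivision weights are strictly positive, so $\v p_{11}^0(\affr M^{(k+K_0)})$ is a convex combination of all vertices of $\affr M^{(k)}$ with all weights bounded below by a fixed $\delta>0$; hence $\Vert \v p_{11}^0(\affr M^{(k+K_0)})\Vert \ge \delta\,\mathrm{MAX}(\affr M^{(k)})$ once one notes (again via positional invariance) that the coordinatewise maximum is attained with a definite angular spread, so the convex combination cannot have small norm. (3) Conversely, every vertex in $H(\affr M^{(k+K_0)})$, in particular the one realising $\mathrm{MIN}$, is influenced by (hence a convex combination involving) the central vertices and the outer ones; combine this with the uniform lower bound from step (2) to get $\mathrm{MIN}(\affr M^{(k)}) \ge \nu\,\mathrm{MAX}(\affr M^{(k)})$ for $k$ large, with $\nu$ depending only on $n$, $m$ and $f$. (4) Extend to all $k\ge 0$ by absorbing the finitely many initial steps into a possibly smaller constant $\nu$, since each $\mathrm{MIN}/\mathrm{MAX}$ ratio for $k<K_0$ is a positive number.

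The main obstacle is step (2)–(3): turning ``influences with positive weights'' into a genuine \emph{quantitative} lower bound on the norm of a convex combination. The subtlety is that a convex combination of non-negative vectors can still be small if the vectors are nearly antipodal in direction — but here the cone condition $\varphi\in(0,\pi)$ prevents directions from spanning more than a half-plane, and the coordinate system $\koord K$ was chosen precisely so that both coordinates of every half-segment vertex are non-negative; so a convex combination with a weight $\ge\delta$ on a vertex of norm $\ge\mathrm{MAX}(\affr M^{(k)})$ still has norm $\ge\delta'\mathrm{MAX}(\affr M^{(k)})$ for a fixed $\delta'$, because in the non-negative quadrant of $\koord K$ the $\max$-norm is monotone and convex combinations cannot decrease a coordinate below $\delta$ times its largest value. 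Making this last inequality precise — essentially a compactness-plus-monotonicity argument on the closed set of non-negative symmetric core nets of given segment angle, using that neither $\mathrm{MIN}$ nor $\mathrm{MAX}$ can vanish there for a net that is not identically zero, together with continuity of $\M M_n$ — is where the real work lies, and I would phrase the conclusion as: the map $\affr N \mapsto \mathrm{MIN}(\affr N)/\mathrm{MAX}(\affr N)$ is continuous and strictly positive on the compact set of normalised non-negative symmetric core nets with segment angle $\varphi$ that arise as iterates, hence bounded below by its minimum $\nu>0$.
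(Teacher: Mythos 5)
Your overall strategy -- propagate the maximum vertex through the influence relation of Lemma~\ref{Lemma:BeeinflussDerRingeNachEinerUnterteilung} and use the non-negative cone to turn ``positive weight'' into a quantitative lower bound on the norm -- is the same as the paper's, and your observation that non-negativity of both coordinates in $\koord K$ prevents cancellation is exactly the content of the paper's Equation~\ref{EQ:PunkteBez1}. But there are two genuine gaps. First, your chain of estimates only yields an inequality of the form $\mathrm{MIN}(\affr M^{(k+r)}) \geq \mu\,\mathrm{MAX}(\affr M^{(k)})$, i.e.\ it compares $\mathrm{MIN}$ at level $k+r$ with $\mathrm{MAX}$ at level $k$; the lemma needs both at the \emph{same} level. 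You never supply the missing link $\mathrm{MAX}(\affr M^{(k)}) \geq c\,\mathrm{MAX}(\affr M^{(k+r)})$ for a fixed $c>0$. The paper closes this loop by running the same two elementary inequalities in reverse ($\max\{\Vert\v a\Vert,\Vert\v b\Vert\}\geq\Vert\tfrac12\v a+\tfrac12\v b\Vert$ for the midpoint step and $\Vert\v a\Vert\geq\Vert\v b\Vert$ for the projection step) to show that $\mathrm{MAX}$ is non-increasing under subdivision, which gives $c=1$. Relatedly, you treat the construction of new vertices as convex combinations with positive weights, but half of the operations are orthogonal projections into the spokes $S_0$ and $S_{0.5}$; a projection is not a convex combination of half-segment vertices, and it shrinks the norm by a factor that degenerates to $0$ as $\varphi\to\pi$. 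The paper's Equation~\ref{EQ:PunkteBez2} isolates this factor as $\cos(\varphi/2)$, which is where the strict hypothesis $\varphi<\pi$ enters; your proposal uses $\varphi\in(0,\pi)$ only to keep the cone in a half-plane and never makes the projection loss quantitative.

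Second, your fallback argument -- that $\affr N\mapsto\mathrm{MIN}(\affr N)/\mathrm{MAX}(\affr N)$ is continuous and strictly positive on ``the compact set of normalised non-negative symmetric core nets that arise as iterates'' -- is circular. The set of iterates is not closed; to make it compact you must pass to its closure, and you would then need $\mathrm{MIN}>0$ at the accumulation points, which is precisely the statement being proved (and is exactly what Lemma~\ref{LEMMA:K>0} later \emph{deduces from} this lemma). Your supporting claim that ``neither $\mathrm{MIN}$ nor $\mathrm{MAX}$ can vanish for a net that is not identically zero'' is false for $\mathrm{MIN}$: a non-zero symmetric non-negative net can perfectly well have a single vertex at the origin. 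So the compactness route cannot replace the explicit uniform constant $\mu=\mu(n,\varphi)$ that the paper extracts; only the final step of your argument -- absorbing the finitely many indices $k<r$ into a smaller $\nu$ -- matches the paper and is sound.
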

\begin{proof}
{
Let $\v p$ be a maximum vertex in $H(\affr M^{(k)})$, i.\,e.,  $\Vert \v p \Vert = \mathrm{MAX}(\affr M^{(k)})$. 

First, we apply 
Lemma~\ref{Lemma:BeeinflussDerRingeNachEinerUnterteilung} to obtain an $r \in \mathbb N$ such that $\v p$ influences every vertex $\v q$ in $H(\affr M^{(k+r)})$. The refinement and averaging operators $R$ and $A$ preserve symmetry. Therefore, any vertex $\v q$ in $H(\affr M^{(k+r)})$ can be computed by calculating midpoints of vertices or by projecting vertices orthogonally into the spokes $S_{0}$ and $S_{0.5}$. According to Lemma~\ref{LEMMA:PosInvariance}, the first half segment lies in the cone $C = C(\affr M)$. Since $C$ is non-negative, it follows for the midpoint of any two points $\v a, \v b \in C$ that
\begin{equation}\label{EQ:PunkteBez1}
\left\Vert \frac{1}{2} \v a + \frac{1}{2} \v b \right\Vert \geq \frac{1}{2} \left\Vert \v a \right\Vert\;.
\end{equation} 
Furthermore, if $\v b$ is the projection of $\v a$ into $S_{0.5}$ and if $\v c$ is the intersection point of $S_{0}$ and the line $\v a \v b$, we have  
\begin{equation}\label{EQ:PunkteBez2}
\Vert \v b\Vert = \frac{\Vert \v b\Vert}{\Vert \v c\Vert} \Vert \v c\Vert = \cos (\varphi/2) \; \Vert \v c\Vert \geq \cos (\varphi/2) \; \Vert \v a\Vert\;,
\end{equation}
as illustrated in Figure~\ref{Abb:ProjAufSp2}.
\FigureCenter{h}{Abb:ProjAufSp2}
  {\includegraphics{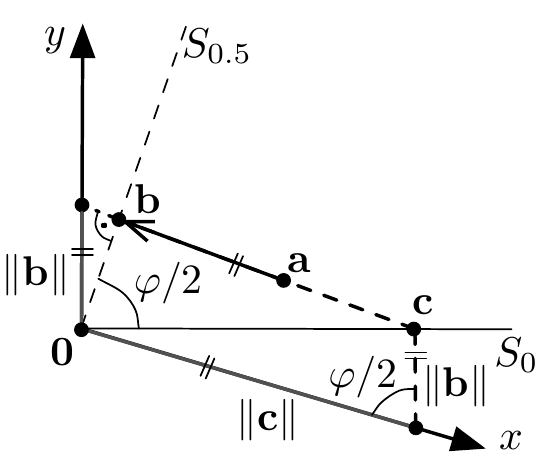}}
  {Projection into the spoke $S_{0.5}$.}

Note that in deriving Equation~\ref{EQ:PunkteBez1}, we used that the coordinate axes are perpendicular to the spokes $S_{0}$ and $S_{0.5}$. Because of symmetry, the same holds if $\v b$ is the projection of $\v a$ into $S_0$.
Altogether we see that there exists a constant $\mu> 0$ depending only on $n$ and $\varphi$ such that 
\[\mathrm{MIN}(\affr M^{(k+r)}) \geq \mu \; \mathrm{MAX}(\affr M^{(k)})\;.\]

Second, we retrace the above argument with Equation~\ref{EQ:PunkteBez1} by
\begin{eqnarray*}
\max\{\Vert \v a \Vert, \Vert \v b \Vert\} &\geq& \frac{1}{2} \Vert \v a \Vert + \frac{1}{2} \Vert \v b \Vert\\
&\geq& \left\Vert \frac{1}{2} \v a + \frac{1}{2} \v b \right\Vert
\end{eqnarray*}
and Equation~\ref{EQ:PunkteBez2} by $\Vert \v a \Vert \geq \Vert \v b \Vert$. 
This then shows
\[\mathrm{MAX}(\affr M^{(k)})) \geq \mathrm{MAX}(\affr M^{(k+r)}))\]
and thus
\[\mathrm{MIN}(\affr M^{(k+r)}) \geq \mu \; \mathrm{MAX}(\affr M^{(k+r)})\;.\]

Because $\mu$ and $r$ do not depend on $k$, Lemma~\ref{LEMMA:MIN-MAX-Beziehung} holds for
\[\nu = \min\left( \{\mu\} \cup \{ \mathrm{MIN}(\affr M^{(k)})  / \mathrm{MAX}(\affr M^{(k)}) \;|\; k=0, 1, \ldots, r-1\}\right)\;.\]
}
\end{proof}

\section{Subdividing grid meshes}\label{SECTION:KonvergenzDerFolge1}
In this section we show, that a sequence of subdivided grid meshes converges to an eigenvector of $M_n$, where we use the results of Section~\ref{SECTION:ComparingRingnets} to prove that the sequence cannot have two distinct accumulation points. Let $\affr N$ be a grid mesh with segment angle $\varphi \in (0, \pi)$ and $j+1$ rings, where $j \ge \omega=\left\lfloor \frac{n-1}{2} \right\rfloor$. Let 
$\affr N^{(k)} = (\M M_n^k \affr N)_{0..j}$ and
$\affr N_k = \affr N^{(k)} / \Vert \affr N^{(k)} \Vert$ for $k\ge 0$.
In this section, we show that the sequence of the normalized ringnets $\affr M_k=\affr N_{0..\omega} / \Vert \affr N_{0..\omega} \Vert$ converges (Theorem~\ref{SATZ:KonvMk}) to a symmetric positive eigennet of $M_n$ restricted to core meshes.

Because of Lemma~\ref{LEMMA:VergleichbareEigenschaftenZweierSymmetrischenNetzen}\,(b), $\affr M_k$ is positive for all sufficiently large $k$ and we can apply Lemma~\ref{LEMMA:MIN-MAX-Beziehung} showing that $\mathrm{MIN}(\affr M_k)$ is greater than a positive constant for all sufficiently large $k$. Hence, and with the aid of Lemma~\ref{LEMMA:PosInvariance}, we obtain
\begin{LEMMA}\label{LEMMA:K>0}
\textbf{\emph{($\affr K > \v 0$)}}\\
Every accumulation point $\affr K$ of $(\affr M_k)$ is positive and its first half segment lies in the cone spanned by the spokes $S_0$ and $S_{0.5}$ of $\affr M$,
\[H(\affr K) \subset C(\affr M)\;.\]
\end{LEMMA}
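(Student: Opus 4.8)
The plan is to deduce Lemma~\ref{LEMMA:K>0} directly from the three lemmata cited in the paragraph preceding it, namely Lemma~\ref{LEMMA:VergleichbareEigenschaftenZweierSymmetrischenNetzen}\,(b), Lemma~\ref{LEMMA:MIN-MAX-Beziehung}, and Lemma~\ref{LEMMA:PosInvariance}. First I would fix an accumulation point $\affr K = \lim_{l\to\infty} \affr M_{k_l}$ for some subsequence $(k_l)$. Since each $\affr M_k$ is a normalized symmetric ringnet, so is $\affr K$, and $\Vert \affr K \Vert = 1$; in particular $\affr K$ is not the zero net. The goal is to upgrade non-negativity (which is easy to get as a limit of non-negative nets) to strict positivity of every vertex in $H(\affr K)$, plus the containment $H(\affr K)\subset C(\affr M)$.

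Next I would establish that $\affr M_k \ge \v 0$ and $H(\affr M_k)\subset C(\affr M)$ for all sufficiently large $k$. The containment in the cone is exactly Lemma~\ref{LEMMA:PosInvariance}: the grid mesh $\affr M$ has $H(\affr M)\subset C(\affr M)$ by construction (its first half segment is spanned by the spokes $S_0$ and $S_{0.5}$), and since $\M M_n$ preserves this property, $H((\M M_n^k\affr M)_{0..\omega})\subset C(\affr M)$ for all $k$; normalizing by a positive scalar does not change this, so $H(\affr M_k)\subset C(\affr M)$. Passing to the limit (the cone is closed), $H(\affr K)\subset C(\affr M)$. For non-negativity, the cone $C(\affr M)$ is itself non-negative in $\koord K$, so $H(\affr M_k)\subset C(\affr M)$ already forces $\affr M_k \ge \v 0$, hence $\affr K \ge \v 0$ in the limit.

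The main work — and the step I expect to be the only real obstacle — is ruling out that some vertex of $H(\affr K)$ has a zero coordinate. Here I would invoke Lemma~\ref{LEMMA:MIN-MAX-Beziehung} applied to the subdivided grid mesh $\affr M^{(k)}=(\M M_n^k\affr M)_{0..\omega}$: there is a positive constant $\nu$ with $\mathrm{MIN}(\affr M^{(k)})/\mathrm{MAX}(\affr M^{(k)})\ge \nu$ for all $k\ge 0$. Since $\affr M_k = \affr M^{(k)}/\Vert\affr M^{(k)}\Vert$ and $\mathrm{MAX}(\affr M^{(k)}) = \Vert\affr M^{(k)}\Vert$ (by the definition of $\mathrm{MAX}$ as the norm of the net), we get $\mathrm{MIN}(\affr M_k)\ge \nu$ and $\mathrm{MAX}(\affr M_k)=1$ for all sufficiently large $k$; note $\mathrm{MAX}(\affr M^{(k)})>0$ here because Lemma~\ref{LEMMA:VergleichbareEigenschaftenZweierSymmetrischenNetzen}\,(b) guarantees $\affr M^{(k)}>\v 0$ for large $k$, so the normalization is well defined. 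Taking the limit along $(k_l)$, $\mathrm{MIN}(\affr K)\ge\nu>0$, i.e. every vertex $\v q\in H(\affr K)$ satisfies $\Vert\v q\Vert\ge\nu$, so no vertex is $\v 0$.

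Finally I would combine $\affr K\ge\v 0$ with $\mathrm{MIN}(\affr K)>0$: if some vertex $\v q=[q_1\;q_2]^{\mathrm t}\in H(\affr K)$ had $q_1=0$, then $\Vert\v q\Vert = |q_2|$, but I still need to exclude $q_2=0$, which is handled by $\mathrm{MIN}(\affr K)\ge\nu$ only giving $\max\{|q_1|,|q_2|\}\ge\nu$, not both coordinates positive. To close this gap I would use that $H(\affr K)\subset C(\affr M)$ together with $\varphi<\pi$: a non-negative vertex of a net lying in the open-ish cone $C(\affr M)$ that is at distance $\ge\nu$ from the origin, once it is known that the core mesh net is a limit of strictly positive nets $\affr M_k$, must actually have both coordinates positive — here I would appeal once more to Lemma~\ref{LEMMA:VergleichbareEigenschaftenZweierSymmetrischenNetzen}\,(b), which propagates strict positivity from one interior vertex to all vertices of $(\M M_n^k\affr M)_{0..\omega}$, so that in fact every vertex of $\affr M_k$ has both coordinates bounded below by a positive constant (again a MIN-type bound, now componentwise), whence the same holds for $\affr K$. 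Thus $\affr K>\v 0$, and the lemma follows.
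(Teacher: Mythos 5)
Your argument is correct and follows essentially the same route as the paper, which likewise obtains the cone containment from Lemma~\ref{LEMMA:PosInvariance}, positivity of the $\affr M_k$ from Lemma~\ref{LEMMA:VergleichbareEigenschaftenZweierSymmetrischenNetzen}\,(b), and the uniform lower bound $\mathrm{MIN}(\affr M_k)\ge\nu>0$ from Lemma~\ref{LEMMA:MIN-MAX-Beziehung}. One remark: your closing step is already complete once you observe that for segment angle $\varphi<\pi$ every nonzero point of $C(\affr M)$ has both coordinates strictly positive in $\koord K$ (so a vertex of $H(\affr K)$ with $\Vert\v q\Vert\ge\nu$ cannot have a vanishing coordinate), and the extra appeal to Lemma~\ref{LEMMA:VergleichbareEigenschaftenZweierSymmetrischenNetzen}\,(b) for a componentwise bound uniform in $k$ is both unnecessary and not actually delivered by that lemma, which only gives positivity for each sufficiently large fixed $k$.
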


Let $\M M$ be the operator $\M M_n$ restricted to $j$-nets. It can be represented by a quadratic matrix with $\affr N^{(k)} = \M M^k \affr N$ because of Lemma~\ref{Lemma:BeeinflussDerRingeNachEinerUnterteilung}.
Let $\lambda_1, \ldots, \lambda_p$ be the eigenvalues of $\M M$ and let $\v v_i^0, \ldots, \v v_i^{\alpha_i}$ be the eigenvectors and the generalized eigenvectors associated with $\lambda_i$, i.\,e.,
\[\M M \v v^0_i = \lambda_i \v v^0_i \quad \mbox{and} \quad \M M \v v^l_i = \lambda_i \v v^l_i + \v v^{l-1}_i, \; l = 1,\ldots,\alpha_i\;.\]
With respect to these vectors, $\affr N$ has a unique decomposition
\[\affr N = \sum_{i=1}^p \sum_{l=0}^{\sigma_i} \nu_{i,l} \v v^{l}_i\]
with $\sigma_i\leq\alpha_i$ and $\nu_{i,\sigma_i}\neq 0$.

It can be readily checked that
\[\M M^k \v v_i^l = \binom{k}{l} \lambda_i^{k-l} \v v_i^0   + \v u_k\]
with $\Vert \v u_k\Vert = \mathrm{o}(|\binom{k}{l} \lambda_i^{k-l}|)$.
Assuming $|\lambda_1|=\cdots=|\lambda_s|$ and $\sigma_1 = \cdots = \sigma_s\ge 0$ and assuming for $i>s$ that $\sigma_i<\sigma_1$ if $|\lambda_i| = |\lambda_1|$, or $\sigma_i=-1$ if $|\lambda_i| > |\lambda_1|$, we get the following lemma.
\begin{LEMMA}\label{LEMMA:DarstellungHP1}
\textbf{\emph{(Expression of accumulation points)}}\\
The limit of any convergent subsequence of $(\affr N_k)$, i.\,e., any accumulation point of $(\affr N_k)$, lies in 
\[\v V := \mathrm{span} \{\v v_1^0, \ldots, \v v_s^0\}\;.\]
If $\lambda_1 = \cdots = \lambda_s>0$, then any accumulation point of $(\affr N_k)$ is an eigenvector of $\M M$ with eigenvalue $\lambda_1$.
\end{LEMMA}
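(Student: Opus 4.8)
The plan is to analyze the asymptotic behaviour of $\M M^k \affr N$ via the Jordan decomposition $\affr N = \sum_{i,l} \nu_{i,l}\v v_i^l$ and to extract the dominant term. First I would apply $\M M^k$ termwise and use the stated estimate $\M M^k \v v_i^l = \binom{k}{l}\lambda_i^{k-l}\v v_i^0 + \v u_k$ with $\Vert\v u_k\Vert = \mathrm{o}(|\binom{k}{l}\lambda_i^{k-l}|)$. Among all indices $(i,l)$ with $\nu_{i,l}\neq 0$, the terms that dominate as $k\to\infty$ are precisely those maximizing $|\lambda_i|$ and, among those, maximizing $l$; by the grouping assumption made just before the lemma, these are exactly the indices $i=1,\dots,s$ with $l=\sigma_1=\dots=\sigma_s$. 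Hence
\[
\M M^k \affr N \;=\; \binom{k}{\sigma_1}\sum_{i=1}^s \nu_{i,\sigma_i}\,\lambda_i^{k-\sigma_1}\,\v v_i^0 \;+\; \v w_k,
\qquad \Vert\v w_k\Vert = \mathrm{o}\!\left(\left|\binom{k}{\sigma_1}\right|\,|\lambda_1|^{\,k-\sigma_1}\right).
\]
Dividing by $\Vert\M M^k\affr N\Vert$, the normalized net $\affr N_k$ differs from a unit-norm multiple of $\sum_{i=1}^s \nu_{i,\sigma_i}\lambda_i^{k-\sigma_1}\v v_i^0$ by a vector of norm $\mathrm{o}(1)$, where I also need that the leading sum itself does not have norm $\mathrm{o}(\binom{k}{\sigma_1}|\lambda_1|^{k-\sigma_1})$; this will follow along subsequences from the normalization $\Vert\affr N_k\Vert=1$ and the $\mathrm{o}$-bound on $\v w_k$. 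Therefore every accumulation point of $(\affr N_k)$ is a limit of unit multiples of vectors lying in $\v V = \mathrm{span}\{\v v_1^0,\dots,\v v_s^0\}$, and since $\v V$ is closed, every accumulation point lies in $\v V$. This gives the first claim.

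For the second claim, assume $\lambda_1=\dots=\lambda_s=\lambda>0$. Then the coefficients $\lambda_i^{k-\sigma_1}=\lambda^{k-\sigma_1}$ are all equal, so the leading sum is simply $\lambda^{k-\sigma_1}\v v$ with $\v v := \sum_{i=1}^s\nu_{i,\sigma_i}\v v_i^0$, a fixed nonzero vector (nonzero because $\v v_1^0,\dots,\v v_s^0$ are linearly independent and at least $\nu_{1,\sigma_1}\neq 0$). Hence $\affr N_k = \pm\,\v v/\Vert\v v\Vert + \mathrm{o}(1)$, so the sequence $(\affr N_k)$ actually has at most the two accumulation points $\pm\v v/\Vert\v v\Vert$, each of which is an eigenvector of $\M M$ for eigenvalue $\lambda$ since $\M M\v v = \lambda\v v$. (If $\sigma_1=0$, i.e.\ all $\lambda_i$ of maximal modulus are non-defective, then $\v v$ is automatically an eigenvector; if $\sigma_1>0$, one observes that $\v v$, being a combination of the genuine eigenvectors $\v v_i^0$, is still an eigenvector.) In either case every accumulation point of $(\affr N_k)$ is an eigenvector with eigenvalue $\lambda_1$, as claimed.

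The one point that needs a little care — and the main obstacle — is justifying that the normalized leading sum does not degenerate, i.e.\ that after dividing by $\Vert\M M^k\affr N\Vert$ the contribution of $\v w_k$ genuinely vanishes in the limit. If $\lambda_1=\dots=\lambda_s>0$ this is immediate because the leading term is $\lambda^{k-\sigma_1}\v v$ with $\v v\neq\v 0$ fixed, so $\Vert\M M^k\affr N\Vert\sim\binom{k}{\sigma_1}\lambda^{k-\sigma_1}\Vert\v v\Vert$ and the error is $\mathrm{o}(1)$ outright. In the general case (complex or sign-varying $\lambda_i$) the sum $\sum_i\nu_{i,\sigma_i}\lambda_i^{k-\sigma_1}\v v_i^0$ could in principle be small for isolated $k$; but along any subsequence realizing an accumulation point of $(\affr N_k)$ the unit-norm condition forces $\Vert\M M^k\affr N\Vert$ to be of the same order as the leading sum, so the error term still tends to $0$ along that subsequence, and the limit lies in $\v V$. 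Since the paper only asserts membership in $\v V$ in the general case and eigenvector status in the real-positive case, this suffices.
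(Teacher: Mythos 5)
Your proposal follows the same route the paper intends: the lemma is stated in the paper without an explicit proof, as an immediate consequence of the Jordan decomposition of $\affr N$ and the asymptotic formula $\M M^k \v v_i^l = \binom{k}{l}\lambda_i^{k-l}\v v_i^0 + \v u_k$, and your write-up is the natural fleshing-out of exactly that argument. The identification of the dominant indices, the closedness of $\v V$, and the second claim (where the leading sum collapses to $\binom{k}{\sigma_1}\lambda^{k-\sigma_1}\v v$ with $\v v=\sum_{i\le s}\nu_{i,\sigma_1}\v v_i^0\neq\v 0$, so that $(\affr N_k)$ in fact converges to $\v v/\Vert\v v\Vert$ --- the $\pm$ is superfluous since the normalizing factors are positive) are all correct.

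The one step whose justification would not survive scrutiny is your treatment of the ``degeneration'' issue in the general case. You claim that along a subsequence realizing an accumulation point, ``the unit-norm condition forces $\Vert\M M^k\affr N\Vert$ to be of the same order as the leading sum.'' The condition $\Vert\affr N_k\Vert=1$ holds for every $k$ by construction and carries no information about the relative sizes of the leading sum $L_k$ and the remainder $\v w_k$; if $\Vert L_k\Vert$ could drop below the order $\binom{k}{\sigma_1}|\lambda_1|^{k-\sigma_1}$ along some subsequence, the normalized net there could be dominated by $\v w_k/\Vert\v w_k\Vert$, which need not approach $\v V$. The correct observation is that this degeneration simply cannot occur: since $\v v_1^0,\ldots,\v v_s^0$ are linearly independent, the map $(c_1,\ldots,c_s)\mapsto\bigl\Vert\sum_i c_i\v v_i^0\bigr\Vert$ is a norm on $\mathbb C^s$, hence equivalent to the maximum norm, and the coefficients $\nu_{i,\sigma_1}\lambda_i^{k-\sigma_1}$ all have modulus exactly $|\nu_{i,\sigma_1}|\,|\lambda_1|^{k-\sigma_1}$. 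Therefore
\[\Vert L_k\Vert \;\ge\; C\,\binom{k}{\sigma_1}\,|\lambda_1|^{k-\sigma_1}\]
for a constant $C>0$ independent of $k$, so $\Vert\M M^k\affr N\Vert$ is comparable to $\binom{k}{\sigma_1}|\lambda_1|^{k-\sigma_1}$ for all $k$ and $\v w_k/\Vert\M M^k\affr N\Vert\to\v 0$ unconditionally. With that substitution your argument is complete.
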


\begin{LEMMA}\label{LEMMA:HP1}
\textbf{\emph{(Accumulation point $\v 1$)}}\\
Here, we multiply and raise vectors to a power coordinate wise.
The sequence $(\v r^k)$ in $\mathbb C^l$ with $\v r = [r_1 \; \ldots \; r_l]^\mathrm{t}$ and $|r_1|=\cdots = |r_l|=1$ has the accumulation point $\v 1:=[1\;\ldots\;1]^\mathrm{t}$.
\end{LEMMA}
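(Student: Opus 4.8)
The plan is to prove this by a pigeonhole/recurrence argument on the arguments of the coordinates, exploiting that each $r_j$ lies on the unit circle. First I would write $r_j = e^{\I 2\pi \theta_j}$ with $\theta_j \in [0,1)$, so that $\v r^k$ has coordinates $e^{\I 2\pi k\theta_j}$, and the claim becomes: there is a subsequence $(k_\nu)$ along which $k_\nu \theta_j \to 0 \pmod 1$ simultaneously for all $j = 1,\ldots,l$. This is the statement that the orbit of the point $(\theta_1,\ldots,\theta_l)$ under the map $x \mapsto x + (\theta_1,\ldots,\theta_l)$ on the torus $(\mathbb R/\mathbb Z)^l$ returns arbitrarily close to the origin.

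The cleanest route is a direct application of the Dirichlet pigeonhole principle. For a given tolerance $\varepsilon > 0$, partition the torus $[0,1)^l$ into $N^l$ congruent sub-cubes of side $1/N$ with $1/N < \varepsilon$. Among the $N^l + 1$ points $k\cdot(\theta_1,\ldots,\theta_l) \bmod 1$ for $k = 0, 1, \ldots, N^l$, two must fall in the same sub-cube, say for $k = a < b$; then $(b-a)\theta_j$ is within $1/N < \varepsilon$ of an integer for every $j$, so $\|\v r^{\,b-a} - \v 1\|_\infty < C\varepsilon$ for a universal constant $C$ (coming from $|e^{\I 2\pi t} - 1| \le 2\pi|t|$). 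Letting $\varepsilon$ run through a sequence tending to $0$ produces integers $k_1 < k_2 < \cdots$ — strict monotonicity can be arranged since the differences $b - a$ so obtained are bounded below by $1$ and we may always restrict attention to $k > k_{\nu-1}$ by replacing $N$ by a larger value — with $\v r^{\,k_\nu} \to \v 1$, which is exactly an accumulation point at $\v 1$.

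An even shorter alternative, if one prefers to quote a standard fact, is to invoke the multidimensional Weyl/Kronecker theorem: the closure of $\{k\cdot(\theta_1,\ldots,\theta_l) \bmod 1 : k \in \mathbb N\}$ is a closed subgroup of the torus, hence contains $\v 0 = 0\cdot(\theta_1,\ldots,\theta_l)$ as a limit of nonzero elements of the orbit (it contains $\v 0$ trivially, and since the orbit is infinite unless all $\theta_j$ are rational, in which case $\v 0$ is hit infinitely often, it is a genuine accumulation point). Either way the conclusion is immediate, but I would favour the self-contained pigeonhole version since it keeps the paper elementary and avoids citing number-theoretic machinery for such a small lemma.

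The only mild subtlety — really the single point requiring care — is the passage from "for each $\varepsilon$ there exists some $k \ge 1$ with $\|\v r^k - \v 1\| < C\varepsilon$" to "there exists a strictly increasing sequence $k_\nu$ with $\v r^{\,k_\nu} \to \v 1$"; one must ensure the indices can be chosen distinct and unbounded. This is handled by noting that for any finite set of previously chosen indices $k_1,\ldots,k_{\nu-1}$, the same pigeonhole argument applied with a sufficiently small $\varepsilon_\nu$ still yields a valid index $b - a$, and among the finitely many candidate values below any bound at least one must exceed $k_{\nu-1}$ once $\varepsilon_\nu$ is small enough — or, more simply, one observes that if $k$ works for tolerance $\varepsilon$ then so does every multiple $2k, 3k, \ldots$ for the slightly relaxed tolerance governed by $|e^{\I 2\pi q t} - 1| \le q\,|e^{\I 2\pi t}-1|$ up to the point where the estimate degrades, so unboundedness is free. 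No other step presents any difficulty.
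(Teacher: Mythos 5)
Your proof is correct, and it produces the two ``close'' indices by a genuinely different mechanism than the paper, although both arguments share the same closing step. The paper argues softly: the bounded sequence $(\v r^k)$ has, by Bolzano--Weierstrass, some accumulation point $\v h$, so for any $\epsilon>0$ there are indices $l_2<l_1$ with arbitrarily large difference and $\Vert \v r^{l_i}-\v h\Vert<\epsilon$; since every coordinate of $\v r^{l_2}$ is unimodular, $\Vert\v r^{l_1-l_2}-\v 1\Vert=\Vert\v r^{l_1}-\v r^{l_2}\Vert<2\epsilon$. You instead obtain the two close indices constructively via the Dirichlet pigeonhole on the torus, and your cancellation step ($a\theta_j$ and $b\theta_j$ in the same sub-cube forces $(b-a)\theta_j$ to lie within $1/N$ of an integer) is exactly the additive form of the paper's unimodularity trick. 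What your version buys is an effective bound on the return time ($k\le N^l$ for accuracy $1/N$), which the lemma does not need; what the paper's version buys is brevity, and in particular the requirement that the good exponents be unbounded comes for free, since an accumulation point is approximated by infinitely many terms and so the differences $l_1-l_2$ can be chosen arbitrarily large. Your treatment of that same point deserves one caution: the first fix you sketch (``among the candidates at least one must exceed $k_{\nu-1}$ once $\varepsilon_\nu$ is small enough'') is not justified on its own --- if all $\theta_j$ are rational the pigeonhole may keep returning the same period --- but your fallback via multiples, using $|e^{\I 2\pi q t}-1|\le q\,|e^{\I 2\pi t}-1|$ after shrinking the tolerance by the corresponding factor, closes the gap completely. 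Either route is a valid and self-contained proof of the lemma.
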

\begin{proof}
{
The sequence $(\v r^k)$ is bounded and therefore has an accumulation point $\v h$. Hence, for all $\epsilon > 0$, there are integers $l_1$ and $l_2$ with arbitrarily large differences $l_1-l_2>0$ such that
\[\Vert \v r^{l_1} - \v h\Vert < \epsilon \quad \mbox{and} \quad \Vert \v r^{l_2} - \v h\Vert < \epsilon\;,\]
where $\Vert \cdot \Vert := \Vert \cdot \Vert_{\infty}$. Because $|r_i^{l_2}| = 1$, we have
\begin{eqnarray*}
\Vert \v r^{l_1-l_2} - \v 1\Vert &=& \Vert (\v r^{l_1-l_2} - \v 1) \v r^{l_2}\Vert\\
&=& \Vert \v r^{l_1} - \v r^{l_2}\Vert\\
&\leq& \Vert \v r^{l_1} - \v h\Vert + \Vert \v r^{l_2} - \v h\Vert < 2 \epsilon\;,
\end{eqnarray*}
which concludes the proof.
}
\end{proof}

\begin{SATZ}\label{SATZ:KonvMk}
\textbf{\emph{(Convergence of $(\affr M_k)$)}}\\
The sequence $(\affr M_k)$ converges.
\end{SATZ}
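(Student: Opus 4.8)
The plan is to establish that the bounded sequence $(\affr M_k)$ has exactly one accumulation point. I would write $\affr G:=\affr N_{0..\omega}$ and let $M$ denote $\M M_n$ restricted to core meshes ($\omega$-nets), so that $\affr M_k=M^k\affr G/\mathrm{MAX}(M^k\affr G)$ by Lemma~\ref{Lemma:BeeinflussDerRingeNachEinerUnterteilung}. Decomposing $\affr G$ into the (generalized) eigenvectors of $M$ and invoking the asymptotics $M^k\v v_i^l=\binom{k}{l}\lambda_i^{k-l}\v v_i^0+\v u_k$ recalled before Lemma~\ref{LEMMA:DarstellungHP1}, the leading part of $M^k\affr G$ is $\binom{k}{\sigma}\lambda^{k-\sigma}\sum_{i=1}^{s}r_i^{\,k-\sigma}\,\nu_{i,\sigma}\v v_i^0$, where $\lambda:=|\lambda_1|>0$ and the $r_i:=\lambda_i/\lambda$ have modulus one. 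Exactly as in Lemma~\ref{LEMMA:DarstellungHP1}, every accumulation point $\affr K$ of $(\affr M_k)$ then lies in $\v W:=\mathrm{span}\{\v v_1^0,\dots,\v v_s^0\}$; on $\v W$ the operator $M$ acts as $\lambda\,U$ with $U$ diagonalizable and all its eigenvalues on the unit circle, so the closure $T$ of $\{U^c:c\ge 0\}$ is a compact torus, and Lemma~\ref{LEMMA:HP1} guarantees $\v 1\in T$. By Lemmas~\ref{LEMMA:K>0} and \ref{LEMMA:MIN-MAX-Beziehung}, every such $\affr K$ is positive, has $H(\affr K)\subset C(\affr M)$, and satisfies $\mathrm{MIN}(\affr K)\ge\nu>0$ for a fixed $\nu$.

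The key step — and the main obstacle — is to upgrade each accumulation point into an eigennet of $M$ with eigenvalue $\lambda$. Fixing $\affr K$ with $\affr M_{k_t}\to\affr K$, I would note that for every fixed $c\ge 0$ the sequence $\affr M_{k_t+c}$ converges to a positive multiple of $U^c\affr K$, which is again an accumulation point; since $\|U^{-c}\|$ is uniformly bounded and the constraint $H(\cdot)\subset C(\affr M)$ ties the two $\koord K$-coordinates within a fixed ratio, the bound $\mathrm{MIN}\ge\nu$ then yields a single $\nu'>0$ that lower-bounds every $\koord K$-coordinate of every vertex of $H(U^c\affr K)$, for all $c\ge 0$. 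Consequently the Haar average $\affr V:=\int_T g\affr K\,d\mu(g)$ over $T$ is $U$-fixed, symmetric, has $H(\affr V)\subset C(\affr M)$ and inherits the bound $\ge\nu'$, so $\affr V>\v 0$ and $M\affr V=\lambda\affr V$. Setting $\beta:=\max_{g\in T}\min\{s>0:\ s\affr V\ge g\affr K\}$ — finite, positive and attained at some $g_0$ because $T$ is compact and $\affr V>\v 0$ — one has $\beta\affr V\ge g\affr K$ for all $g\in T$, with equality in a coordinate at $g_0$. If $\beta\affr V\neq g_0\affr K$, then $\v 0<\beta\affr V\gvertneqq g_0\affr K$ are two distinct positive symmetric $\omega$-nets in $C(\affr M)$, so Lemma~\ref{LEMMA:VergleichbareEigenschaftenZweierSymmetrischenNetzen}(b) (with $j=\omega$) gives an $\alpha\in(0,1)$ with $\alpha\,M^k(\beta\affr V)>M^k(g_0\affr K)$ for all large $k$, i.e.\ $\alpha\beta\affr V>(g_0U^k)\affr K$; and since $\overline{\{g_0U^k:k\ge 0\}}=T$, this forces $\alpha\beta\affr V\ge g\affr K$ for every $g\in T$, contradicting the maximality of $\beta$. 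Hence $g_0\affr K=\beta\affr V$ is $U$-fixed, so $\affr K=g_0^{-1}(g_0\affr K)$ is $U$-fixed too, i.e.\ $M\affr K=\lambda\affr K$.

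Uniqueness of the accumulation point then follows from the partial order alone, by the same scaling trick used in Lemma~\ref{LEMMA:VergleichbareEigenschaftenZweierSymmetrischenNetzen}. Let $\affr K_1,\affr K_2$ be two accumulation points; by the previous step both are positive, norm-one eigennets of $M$ with the same eigenvalue $\lambda$ and with half segments in $C(\affr M)$. Put $\beta:=\min\{s>0:\ s\affr K_1\ge\affr K_2\}\in(0,\infty)$. If $\beta\affr K_1\neq\affr K_2$, then $\v 0<\beta\affr K_1\gvertneqq\affr K_2$ and Lemma~\ref{LEMMA:VergleichbareEigenschaftenZweierSymmetrischenNetzen}(b) yields $\alpha\in(0,1)$ with $\alpha\,M^k(\beta\affr K_1)>M^k\affr K_2$ for all large $k$, i.e.\ $\alpha\beta\lambda^k\affr K_1>\lambda^k\affr K_2$, i.e.\ $\alpha\beta\affr K_1>\affr K_2$, which contradicts the minimality of $\beta$. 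So $\beta\affr K_1=\affr K_2$, and taking $\mathrm{MAX}$ on both sides forces $\beta=1$; hence $\affr K_1=\affr K_2$. A bounded sequence with a single accumulation point converges.

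I expect the second paragraph to be the hard part. The issue is that $\affr G$ may excite several eigenvalues of $M$ of the same maximal modulus $\lambda$ — in particular a complex-conjugate pair, which Lemma~\ref{LEMMA:EigenwertVSymmEigennetz} does not exclude (it only constrains an individual reflection-symmetric eigennet) — so the accumulation points a priori fill a whole $T$-orbit rather than reduce to a single eigennet. Producing a genuine positive eigennet by averaging over $T$ and then collapsing the orbit to a point via the strict monotonicity of Lemma~\ref{LEMMA:VergleichbareEigenschaftenZweierSymmetrischenNetzen}(b) is what makes the argument work; if instead one could show beforehand that all maximal-modulus contributions coincide with one positive $\lambda$ (equivalently $s=1$ after accounting for symmetry), Lemma~\ref{LEMMA:DarstellungHP1} would directly render every accumulation point an eigennet and the second paragraph would become superfluous.
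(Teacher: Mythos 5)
Your argument is correct, but it reaches the conclusion by a genuinely different route through the key middle step. The paper never upgrades the accumulation points to eigenvectors inside this proof (that is deferred to Corollary~\ref{FOLGERUNG:KonvergenzMk}\,(a), where it is read off from the convergence afterwards). Instead, given two accumulation points $\affr K$ and $\affr L$, it compares the two orbits directly via extremal scaling sequences: the largest $(a_k)$ and smallest $(b_k)$ with $a_k\lambda^{-k}\M M^k\affr K\le \lambda^{-k}\M M^k\affr L\le b_k\lambda^{-k}\M M^k\affr K$, which are monotone by Lemma~\ref{LEMMA:VergleichbareEigenschaftenZweierSymmetrischenNetzen}\,(a); passing to the limit along the recurrence subsequence of Lemma~\ref{LEMMA:HP1} forces $a_k\equiv a$ and $b_k\equiv b$, and Lemma~\ref{LEMMA:VergleichbareEigenschaftenZweierSymmetrischenNetzen}\,(b) then contradicts the maximality of $a$ unless $\affr L=a\affr K$. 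You replace this bookkeeping by the compact abelian group $T=\overline{\{U^c\}}$ (not necessarily a torus --- it may be finite --- but that is immaterial), Haar averaging to manufacture a genuine positive eigennet $\affr V$, and a max--min argument over the orbit $T\affr K$; the collapsing is still done by the same two ingredients, order preservation and eventual strictness from Lemma~\ref{LEMMA:VergleichbareEigenschaftenZweierSymmetrischenNetzen}. Your route buys a self-contained proof that every accumulation point is already an eigennet with the positive eigenvalue $\lambda=|\lambda_1|$, after which uniqueness is the standard Perron--Frobenius collapse of your last paragraph; it costs the extra structure theory, and one small point there deserves care: you need the density of $\{g_0U^k:k\ge k_0\}$ in $T$ for every $k_0$, which requires the recurrence of Lemma~\ref{LEMMA:HP1} (so that $U^{-1}\in T$) and not merely $\overline{\{g_0U^k:k\ge 0\}}=T$ as written. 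With that noted, both proofs rest on exactly the same lemmata (\ref{LEMMA:DarstellungHP1}, \ref{LEMMA:HP1}, \ref{LEMMA:K>0}, \ref{LEMMA:VergleichbareEigenschaftenZweierSymmetrischenNetzen}), and yours is a valid, if heavier, alternative.
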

\begin{proof}
{
Since $(\affr M_k)$ is bounded, it suffices to show that $(\affr M_k)$ has only one accumulation point. Let $\affr K$ and $\affr L$ be two accumulation points of $(\affr M_k)$. Using Lemma~\ref{LEMMA:DarstellungHP1} with $j=\omega$, we can write $\affr K$ and $\affr L$ as
\[\affr K = \sum_{i=1}^s \nu_i \v v_i^0 \quad \mbox{and}\;\quad\affr L = \sum_{i=1}^s \mu_i \v v_i^0\;.\]
Let $\lambda = |\lambda_1|$, $r_i = \lambda_i/\lambda$ and $\v r^k = [r_1^k \; \ldots \; r_s^k]^\mathrm{t}$. Then,
\[\frac{1}{\lambda^k} \M M^k \affr K = [\nu_1 \v v_1^0 \; \ldots \; \nu_s \v v_s^0] \v r^k\quad
\mbox{and}\quad
\frac{1}{\lambda^k} \M M^k \affr L = [\mu_1 \v v_1^0 \; \ldots \; \mu_s \v v_s^0] \v r^k\;.\] 
Due to Lemma~\ref{LEMMA:HP1}, there is a sequence $(k_n)$ with $\v r^{k_n} \to \v 1$ as $n\to \infty$. Therefore,
\begin{equation}\label{EQ:K_L}
\frac{1}{\lambda^{k_n}} \M M^{k_n} \affr K \to \affr K\quad \mbox{and}\quad\frac{1}{\lambda^{k_n}} \M M^{k_n} \affr L \to \affr L\;.
\end{equation}

From Lemma~\ref{LEMMA:K>0}, it follows that $\affr K > \v 0$ and $\affr L > \v 0$. 
Therefore, there is a largest sequence $(a_k) \subset \mathbb R_{>0}$ and a smallest sequence $(b_k) \subset \mathbb R_{>0}$ such that
\begin{equation}\label{EQ:akbk}
a_k \; \left(\frac{1}{\lambda}\right)^k \M M^k \affr K \leq \left(\frac{1}{\lambda}\right)^k \M M^k \affr L \leq b_k \; \left(\frac{1}{\lambda}\right)^k \M M^k \affr K 
\end{equation}
for all $k \geq 0$.
Since $(a_k)$ is maximal and $(b_k)$ is minimal, $(a_k)$ is monotonically increasing and $(b_k)$ monotonically decreasing because of Lemma~\ref{LEMMA:VergleichbareEigenschaftenZweierSymmetrischenNetzen}\,(a). 
Since $(a_k)$ is bounded by $b_0$ and $(b_k)$ by $a_0$, both sequences converge. Let $a=\lim a_k$ and $b=\lim b_k$. 
Then we infer from \ref{EQ:K_L} and \ref{EQ:akbk} that
\[a \affr K \leq \affr L \leq b \affr K\;,\]
and again using \ref{EQ:akbk} that $a\leq a_0 \leq b_0 \leq b$.
Consequently, the monotonicity implies $a_k=a$ and $b_k=b$ for all $k$.

Next, we assume $\affr L \ge a \affr K \neq \affr L$. Due to Lemma~\ref{LEMMA:K>0}, we can apply Lemma~\ref{LEMMA:VergleichbareEigenschaftenZweierSymmetrischenNetzen}\,(b) and get $\M M^k \affr L > a \M M^k \affr K$ for all sufficiently large $k$, which contradicts the maximality of $a_k=a$. Thus, $\affr L = a\affr K$.

Since $1 = \Vert \affr M_k \Vert = \Vert \affr K \Vert = \Vert \affr L \Vert = \Vert a \affr K \Vert = a \Vert \affr K \Vert$, we obtain $a=1$ and $\affr L = \affr K$, which concludes the proof.
}
\end{proof}

\begin{FOLGERUNG}\label{FOLGERUNG:KonvergenzMk}
\textbf{\emph{(Properties of the limit ringnet)}}\\
Let $\affr M_{\infty}$ be the limit ringnet of $(\affr M_k)$.
\begin{enumerate}
\item[(a)] $\affr M_{\infty}$ is an eigenvector of $\M M$ with eigenvalue $\lambda = \Vert \M M \affr M_{\infty} \Vert >0$.
\item[(b)] $\affr M_{\infty} > \v 0$.
\item[(c)] $\affr M_{\infty}$ has norm $1$.
\item[(d)]  Since $M$ preserves symmetry, $\affr M_{\infty}$ is symmetric with the same segment angle as $\affr M$.
\end{enumerate}
\end{FOLGERUNG}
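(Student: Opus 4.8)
The plan is to read off (b), (c) and (d) almost immediately from Theorem~\ref{SATZ:KonvMk} together with Lemma~\ref{LEMMA:K>0}, and to do the actual work in (a), which I would get by passing to the limit in the normalized subdivision recursion; the single real hurdle will be to show that the normalizing factors do not collapse, and that is exactly where Lemma~\ref{LEMMA:MIN-MAX-Beziehung} is needed.

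For (a) I would write $\affr M^{(k)}=(\M M_n^k\affr M)_{0..\omega}$, so that $\affr M_k=\affr M^{(k)}/\Vert\affr M^{(k)}\Vert$, and recall from Lemma~\ref{Lemma:BeeinflussDerRingeNachEinerUnterteilung} that $\affr M^{(k+1)}=\M M\affr M^{(k)}$, where $\M M$ is $\M M_n$ restricted to core meshes, a square matrix and hence continuous. Set $\lambda_k:=\Vert\M M\affr M_k\Vert=\Vert\affr M^{(k+1)}\Vert/\Vert\affr M^{(k)}\Vert>0$, so that $\M M\affr M_k=\lambda_k\,\affr M_{k+1}$. Since $\affr M_k\to\affr M_\infty$ by Theorem~\ref{SATZ:KonvMk} and $\M M$ and $\Vert\cdot\Vert$ are continuous, $\lambda_k\to\lambda:=\Vert\M M\affr M_\infty\Vert$. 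The point to verify is $\lambda>0$: iterating the identity above gives $\M M^r\affr M_k=\bigl(\prod_{i=k}^{k+r-1}\lambda_i\bigr)\affr M_{k+r}$, hence $\prod_{i=k}^{k+r-1}\lambda_i=\Vert\affr M^{(k+r)}\Vert/\Vert\affr M^{(k)}\Vert$, and the proof of Lemma~\ref{LEMMA:MIN-MAX-Beziehung} supplies a fixed $r\in\mathbb N$ and $\mu>0$ with $\Vert\affr M^{(k+r)}\Vert\ge\mu\,\Vert\affr M^{(k)}\Vert$ for all $k$; letting $k\to\infty$ yields $\lambda^r\ge\mu$, so $\lambda>0$. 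Then passing to the limit in $\affr M_{k+1}=\M M\affr M_k/\lambda_k$ gives $\affr M_\infty=\M M\affr M_\infty/\lambda$, i.e.\ $\M M\affr M_\infty=\lambda\affr M_\infty$ with $\lambda=\Vert\M M\affr M_\infty\Vert>0$.

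The rest is short. For (b): $\affr M_\infty$ is the unique accumulation point of $(\affr M_k)$, so $\affr M_\infty>\v 0$ by Lemma~\ref{LEMMA:K>0}. For (c): $\Vert\affr M_k\Vert=1$ for every $k$ and $\Vert\cdot\Vert$ is continuous, whence $\Vert\affr M_\infty\Vert=1$ (equivalently, $\lambda=\Vert\lambda\affr M_\infty\Vert=\lambda\Vert\affr M_\infty\Vert$ with $\lambda>0$ forces $\Vert\affr M_\infty\Vert=1$). For (d): the operators $\M R$ and $\M A$ preserve rotation and reflection symmetry and leave the valence $m$ and the frequency $f$ unchanged, hence fix the segment angle $\varphi=2\pi f/m$ and the coordinate system $\koord K$; so every $\affr M_k$ is symmetric with segment angle $\varphi\in(0,\pi)$, and since the defining identities $\v p_{ij}^l e^{\I 2\pi f/m}=\v p_{ij}^{l+1}$ and $\widetilde{\affr M}=\overline{\affr M}$ are closed conditions, $\affr M_\infty$ inherits all of them.

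I expect the one genuine obstacle to be the inequality $\lambda>0$ in (a): ruling out $\M M\affr M_\infty=\v 0$ is precisely the step that requires Lemma~\ref{LEMMA:MIN-MAX-Beziehung}, i.e.\ the fact that the first half segment cannot collapse toward the center under subdivision; once that is in hand, the remainder is continuity plus bookkeeping.
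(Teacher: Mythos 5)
Your proof is correct and follows essentially the same route as the paper: passing to the limit in the normalized recursion $\M M\affr M_k=\lambda_k\affr M_{k+1}$ is precisely the paper's one-line computation $\M M\affr M_\infty/\Vert\M M\affr M_\infty\Vert=\affr M_\infty$, and your treatments of (b), (c) and (d) match the paper's (which cites Lemma~\ref{LEMMA:K>0} for (b) and the proof of Theorem~\ref{SATZ:KonvMk} for (c)). Your extra step deriving $\lambda>0$ from the bound $\Vert\affr M^{(k+r)}\Vert\ge\mu\,\Vert\affr M^{(k)}\Vert$ in the proof of Lemma~\ref{LEMMA:MIN-MAX-Beziehung} is a worthwhile refinement, since the paper divides by $\Vert\M M\affr M_\infty\Vert$ without explicitly ruling out that it vanishes.
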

\begin{proof}
Because 
\begin{eqnarray*}
\frac{\M M \affr M_{\infty}}{\Vert \M M \affr M_{\infty} \Vert} &=& \frac{\M M \; \lim_{k \to \infty} \frac{\M M^k \affr M}{\Vert \M M^k \affr M \Vert}}{\left\Vert \M M \; \lim_{k \to \infty} \frac{\M M^k \affr M}{\Vert \M M^k \affr M \Vert} \right\Vert}
= \lim_{k \to \infty} \frac{\frac{\M M^{k+1} \affr M}{\Vert \M M^k \affr M \Vert}}{\left\Vert \frac{\M M^{k+1} \affr M}{\Vert \M M^k \affr M \Vert} \right\Vert}\\
&=& \lim_{k \to \infty} \frac{\M M^{k+1} \affr M}{\Vert \M M^{k+1} \affr M \Vert}
= \affr M_{\infty}\;,
\end{eqnarray*}
we get Statement~(a), and by Lemma~\ref{LEMMA:K>0} Statement~(b), and Statement~(c) has been shown in the proof for Theorem~\ref{SATZ:KonvMk}.
\end{proof}

\section{Subdominant eigenvalues of the midpoint operator restricted to core meshes}\label{SECTION:Subdominance}
Let $\affr M$ be a grid mesh with $\omega+1$ rings, frequency $f$ and segment angle $\varphi = 2f \pi/m \in (0, \pi)$.
Let $\M M$ be the matrix of $\M M_n$ restricted to core meshes and let $\affr M^{(k)} = \M M^k \affr M$ for $k\ge 0$. Due to Theorem~\ref{SATZ:KonvMk} and Corollary~\ref{FOLGERUNG:KonvergenzMk}, the sequence $\affr M_k := \affr M^{(k)} / \Vert \affr M^{(k)}\Vert$ converges to a symmetric eigennet $\affr M_\infty$ with segment angle $\varphi$ and some certain positive eigenvalue $\lambda$. In this section, we show that $\lambda$ is the dominant eigenvalue for the eigenspace of frequency $f$ and that $\lambda$ is smaller for higher frequencies. Moreover, we derive that for frequency $1$, $\lambda$ is a double subdominant eigenvalue of $M$.

\begin{SATZ}\label{SATZ:EigenschaftLimesnetzes}
\textbf{\emph{(Maximum property of $\affr M_\infty$)}}\\
Any rotation symmetric eigennet $\affr N$ of $\M M$ with segment angle $\varphi_{\affr N} \in [\varphi, \pi]$ has an eigenvalue $\mu$ with $|\mu|<\lambda$ or is a multiple of  $\affr M_\infty$.
\end{SATZ}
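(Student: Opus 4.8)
The plan is to exploit the partial order on symmetric ringnets from Section~\ref{SECTION:ComparingRingnets} together with the convergence result of Theorem~\ref{SATZ:KonvMk}. Let $\affr N$ be a rotation symmetric eigennet of $\M M$ with eigenvalue $\mu$ and segment angle $\varphi_{\affr N}\in[\varphi,\pi]$. Since $\affr N$ is rotation symmetric but possibly not reflection symmetric, I would first reduce to the reflection symmetric case: either $\affr N$ is already reflection symmetric, or one can form $\widetilde{\overline{\affr N}}$, which is again an eigennet for the conjugate eigenvalue $\overline\mu$ (this is essentially the computation in the proof of Lemma~\ref{LEMMA:EigenwertVSymmEigennetz}), and replace $\affr N$ by a real linear combination; in either case $|\mu|$ is unchanged and, where it matters, $\mu$ may be taken real by Lemma~\ref{LEMMA:EigenwertVSymmEigennetz}. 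If $\mu$ is non-real we are done since $|\mu|<\lambda$ cannot fail here — but that is precisely what must be proved, so the real argument below is the substance.

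Assume then $\mu$ is real and, after rescaling $\affr N$, that $\mu\ge 0$ (replacing $\affr N$ by $-\affr N$ if necessary, noting $\M M(-\affr N)=\mu(-\affr N)$) and that $\mathrm{MAX}(\affr N)\le\mathrm{MAX}(\affr M_\infty)$ with equality in at least one vertex, so that neither $\affr N\le\affr M_\infty$ nor $\affr N\ge\affr M_\infty$ fails trivially. The key step is to compare $\affr N$ and $\affr M_\infty$ using Lemma~\ref{LEMMA:VergleichbareEigenschaftenZweierSymmetrischenNetzen}: after rotating $\affr M$ (hence $\affr M_\infty$) by $(\varphi_{\affr N}-\varphi)/4$ so that the two coordinate systems agree, consider the largest $a>0$ and smallest $b>0$ with $a\affr M_\infty\le\affr N\le b\affr M_\infty$ on the first half segment — these exist and are finite because $\affr M_\infty>\v 0$ by Corollary~\ref{FOLGERUNG:KonvergenzMk}(b). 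Applying $\M M$ and using part~(a) of Lemma~\ref{LEMMA:VergleichbareEigenschaftenZweierSymmetrischenNetzen} gives $a\lambda\affr M_\infty\le\mu\affr N\le b\lambda\affr M_\infty$, i.e. $(a\lambda/\mu)\affr M_\infty\le\affr N\le(b\lambda/\mu)\affr M_\infty$ provided $\mu>0$. By maximality of $a$ and minimality of $b$ this forces $a\lambda/\mu\le a$ and $b\lambda/\mu\ge b$, hence $\lambda\le\mu$ and $\lambda\ge\mu$, so $\mu=\lambda$; moreover $a=b$ unless one of the inequalities $a\affr M_\infty\le\affr N$, $\affr N\le b\affr M_\infty$ is not an equality of core meshes, in which case part~(b) of Lemma~\ref{LEMMA:VergleichbareEigenschaftenZweierSymmetrischenNetzen} (applicable since $\affr 0<a\affr M_\infty$ by Corollary~\ref{FOLGERUNG:KonvergenzMk}(b)) yields a strict inequality $\M M^k\affr N>a\lambda^k\affr M_\infty$ for large $k$, contradicting maximality of $a$ exactly as in the proof of Theorem~\ref{SATZ:KonvMk}. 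Therefore $a=b$, so $\affr N=a\affr M_\infty$ is a multiple of $\affr M_\infty$ and in particular $\mu=\lambda$; and if $\mu<0$, the same comparison applied to $\M M^2$ (which has eigenvalue $\mu^2$ and preserves the order) gives $\mu^2=\lambda^2$ and $\affr N=a\affr M_\infty$, whence $\mu=-\lambda$ would force $\M M\affr N=\mu\affr N=-\lambda a\affr M_\infty$ while also $\M M\affr N=a\M M\affr M_\infty=\lambda a\affr M_\infty$, impossible since $\affr M_\infty>\v 0$; so only $\mu=\lambda$ survives, i.e. $\affr N$ is a multiple of $\affr M_\infty$.

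The one remaining case is $\mu=0$: then $\M M\affr N=\v 0$, but $\affr N$ is rotation symmetric of frequency $f$ (the frequency of $\affr M$, since the comparison only makes sense when the segment angles are commensurable in this way — here I would note that rotation symmetry with segment angle $\varphi_{\affr N}$ determines the frequency, and if it differs from $f$ the statement is about a different eigenspace and the bound $|\mu|<\lambda$ should instead be obtained by the frequency monotonicity claimed in the section introduction; for the present theorem I restrict to $\varphi_{\affr N}$ giving the same or a coarser rotational symmetry) — in any event $\affr N=\lim_k \M M^k(\,\cdot\,)$ arguments do not apply to the kernel, and one argues directly that a nonzero element of $\ker\M M$ cannot be non-negative-comparable to the positive eigennet $\affr M_\infty$ without the comparison collapsing $\affr N$ to $0$; since $\affr N\neq\v 0$ this forces $|\mu|=0<\lambda$, consistent with the conclusion. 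The main obstacle I anticipate is the bookkeeping around segment angles and the rotation by $(\varphi_{\affr N}-\varphi)/4$: one must check that after this rotation the hypothesis ``$H(\affr M)$ lies in $C(\affr M)$'' needed for Lemma~\ref{LEMMA:VergleichbareEigenschaftenZweierSymmetrischenNetzen} still holds (it does, by Lemma~\ref{LEMMA:PosInvariance} applied to $\affr M_\infty$, which is positive and whose half segment lies in its cone by Lemma~\ref{LEMMA:K>0}), and that the ordering $\le$ in the rotated coordinate system is the one to which parts (a) and (b) of that lemma apply. Once the coordinate systems are correctly aligned, the argument is a clean adaptation of the sandwiching-and-maximality proof of Theorem~\ref{SATZ:KonvMk}.
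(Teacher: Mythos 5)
There are two genuine gaps. The first and largest: you explicitly punt on the case of a non-real eigenvalue (``if $\mu$ is non-real we are done since $|\mu|<\lambda$ cannot fail here --- but that is precisely what must be proved''), and your proposed reduction to the reflection symmetric case does not close it, because $\affr N+\widetilde{\overline{\affr N}}$ is an eigenvector only when $\mu$ is real; for $\mu\notin\mathbb R$ the symmetrized net is not an eigennet, so Lemma~\ref{LEMMA:EigenwertVSymmEigennetz} and your subsequent order argument do not apply to it. This case is exactly where the paper's proof does its real work: it forms $\affr N_s=\affr N_1+\widetilde{\overline{\affr N_1}}$ with $\affr N_1$ scaled so that $\langle\affr N_s,\affr M_\infty\rangle=0$ (hence $\affr N_s$ is not a multiple of $\affr M_\infty$), normalizes $\affr N_s\le\affr M_\infty$ with $\affr N_s\nless\affr M_\infty$, writes $\M M^k\affr N_s=\mu^k\affr N_1+\overline\mu^k\widetilde{\overline{\affr N_1}}$, and uses the accumulation-point Lemma~\ref{LEMMA:HP1} to pick $k_l$ with $(\mu/|\mu|)^{k_l}\to 1$; if $|\mu|\ge\lambda$ this forces $\affr N_s\le\alpha\affr M_\infty<\affr M_\infty$ in the limit, contradicting the normalization. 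Nothing in your proposal replaces this step.

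The second gap is in your sandwich: the ``largest $a>0$'' with $a\affr M_\infty\le\affr N$ need not exist, since an eigennet that is not a multiple of $\affr M_\infty$ will in general have zero or negative coordinates in its first half segment; $\affr M_\infty>\v 0$ only guarantees the upper bound $b$. Your division by $a$ to conclude $\lambda\le\mu$ therefore fails, and the eigennets you cannot sandwich are precisely those for which $|\mu|<\lambda$ must be proved. (Your earlier normalization via $\mathrm{MAX}$ also does not yield $\affr N\le\affr M_\infty$, since the maxima may be attained at different vertices.) The paper avoids all of this with a one-sided normalization: replace $\affr N$ by $\pm c\,\affr N$ so that $\affr N\le\affr M_\infty$ and $\affr N\nless\affr M_\infty$, which is always possible; then Lemma~\ref{LEMMA:VergleichbareEigenschaftenZweierSymmetrischenNetzen}\,(b) gives the strict inequality $\mu^{2k}\affr N=\M M^{2k}\affr N<\lambda^{2k}\affr M_\infty$ for large $k$, and evaluating at a touching coordinate --- positive because $\affr M_\infty>\v 0$ --- yields $\mu^{2k}<\lambda^{2k}$, hence $|\mu|<\lambda$ with $\mu$ real by Lemma~\ref{LEMMA:EigenwertVSymmEigennetz}. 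The two-sided $a$--$b$ sandwich is the right tool in Theorem~\ref{SATZ:KonvMk}, where both nets being compared are positive accumulation points, but not here. Your remarks on the rotation by $(\varphi_{\affr N}-\varphi)/4$ and on why Lemma~\ref{LEMMA:VergleichbareEigenschaftenZweierSymmetrischenNetzen} is applicable are correct and match the paper.
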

\begin{proof}
{
It suffices to consider the case that $\affr N$ is not a multiple of  $\affr M_\infty$.

First, we assume that $\affr N$ is reflection symmetric. As in Lemma~\ref{LEMMA:VergleichbareEigenschaftenZweierSymmetrischenNetzen}, we rotate $\affr M_\infty$ by $(\varphi_{\affr N}-\varphi)/4$ degrees counterclockwise. We use the coordinate system of $\affr N$ and normalize $\affr N$ such that
\[\affr N \leq \affr M_\infty \quad \mbox{and} \quad \affr N \nless \affr M_\infty\;.\]
From Lemma~\ref{LEMMA:VergleichbareEigenschaftenZweierSymmetrischenNetzen}\,(b) it follows that for sufficiently large $k$
\[\M M^{2k} \affr N < \M M^{2k} \affr M_\infty \quad \mbox{and} \quad \mu^{2k}\affr N < \lambda^{2k}\affr M_\infty\;.\]
From Lemma~\ref{LEMMA:EigenwertVSymmEigennetz} it follows that $\mu^2$ is positive, which yields $|\mu|<\lambda$.

Second, we assume that $\affr N$ is not reflection symmetric and $\mu\ne 0$. Using the scalar product $\langle \affr N, \affr M_\infty\rangle := \affr N^\mathrm{t} \overline{\affr M}_\infty$, we define
\[\affr N_1 := \left\{ \begin{array}{rll} \affr N &\mbox{if}& \langle \affr N, \affr M_\infty\rangle = 0 \\ \frac{\I}{\langle \affr N, \affr M_\infty\rangle}\affr N &\mbox{if}& \langle \affr N, \affr M_\infty\rangle \neq 0 \end{array} \right.\]
and obtain the symmetric ringnet
\[\affr N_s = \affr N_1 + \widetilde{\overline{\affr N_1}}\]
with segment angle $\varphi_{\affr N}$.
Since $\langle \widetilde{\overline{\affr N}}, \widetilde{\overline{\affr M}}_\infty \rangle = \overline{\langle \affr N, \affr M_\infty\rangle}$, we have $\langle \affr N_s, \affr M_\infty\rangle = 0$ and, hence, $\affr N_s \neq \beta \affr M_\infty$ for all $\beta \in \mathbb C\backslash \{0\}$.

If $\affr N_s = \v 0$, then $\I \affr N_1 = \widetilde{\overline{\I \affr N_1}}$ is a symmetric ringnet with eigenvalue $\mu$ and, as shown above, $|\mu| < \lambda$.

If $\affr N_s \neq \v 0$, we rotate $\affr M_\infty$ by $(\varphi_{\affr N}-\varphi)/4$ degrees counterclockwise as in Lemma~\ref{LEMMA:VergleichbareEigenschaftenZweierSymmetrischenNetzen}. We use the coordinate system of $\affr N_s$ and normalize $\affr N_s$ such that
\begin{equation}\label{EQ:NsK}
\affr N_s \leq \affr M_\infty \quad \mbox{and} \quad \affr N_s \nless \affr M_\infty\;.
\end{equation}
Because of Lemma~\ref{LEMMA:VergleichbareEigenschaftenZweierSymmetrischenNetzen}\,(b),  there exists an $\alpha\in (0,1)$ such that for sufficiently large $k$
\[\mu^k \affr N_1 + \overline{\mu}^k \widetilde{\overline{\affr N_1}} =  \M M^k \affr N_s < \alpha \M M^k \affr M_\infty = \alpha \lambda^k \affr M_\infty\]
and
\[\left(\frac{\mu}{|\mu|}\right)^k \affr N_1 + \left(\frac{\overline{\mu}}{|\mu|}\right)^k \widetilde{\overline{\affr N_1}} < \alpha \left(\frac{\lambda}{|\mu|}\right)^k \affr M_\infty\;.\]
According to Lemma~\ref{LEMMA:HP1}, there is a sequence $(k_l)_{l\in\mathbb N}$ with $(\mu/|\mu|)^{k_l} \to 1$. Consequently $|\mu|<\lambda$ because otherwise
\[\lim_{l\to \infty} \left(\frac{1}{|\mu|}\right)^{k_l} \M M^{k_l} \affr N_s = \affr N_s \leq \alpha \affr M_\infty < \affr M_\infty\;,\]
which contradicts \ref{EQ:NsK}.
}
\end{proof}

The subdivision operator $\M M$ can be represented by a block-cyclic matrix. Consequently, there is a basis of rotation symmetric (generalized) eigenvectors with frequencies \mbox{$f \in \{0,\ldots, m-1\}$}, see \cite[Section 3]{PR98}, \cite[Section 4.2]{ZS2001} and \cite[Section 3.5 on Pages 47-58]{Chen2005}.
Since the conjugate eigenvectors have the frequencies $m-f$, we obtain
\begin{FOLGERUNG}\label{FOLGERTUNG:LambdaPhi}
\textbf{\emph{(Dominant eigenvalues)}}\\
The eigenvalue $\lambda$ of $\affr M_\infty$ is the dominant eigenvalue associated with frequencies $f$ and $m-f$. (Since $\affr M_\infty$ and $\lambda$  depend only on $\varphi$ and the fixed $n$, we denote $\lambda$ by $\lambda_\varphi$.)
\end{FOLGERUNG}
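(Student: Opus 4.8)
The statement to prove is Corollary~\ref{FOLGERTUNG:LambdaPhi}: that the eigenvalue $\lambda$ belonging to the limit eigennet $\affr M_\infty$ (with segment angle $\varphi$) is the \emph{dominant} eigenvalue among all generalized eigenvectors of frequency $f$ and, by conjugation, of frequency $m-f$. The plan is to assemble this from three ingredients already available: the block-cyclic structure of $\M M$, Theorem~\ref{SATZ:EigenschaftLimesnetzes}, and Lemma~\ref{LEMMA:EigenwertVSymmEigennetz}. First I would recall that a block-cyclic matrix admits a basis of rotation symmetric generalized eigenvectors, each carrying a well-defined frequency $f \in \{0,\ldots,m-1\}$ (the cited references \cite{PR98}, \cite{ZS2001}, \cite{Chen2005} supply this). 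In particular, every eigenvalue of $\M M$ whose eigenspace meets frequency $f$ is the eigenvalue of some rotation symmetric eigennet of frequency $f$.

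\textbf{Key steps.} Fix the frequency $f$ with $\varphi = 2\pi f/m \in (0,\pi)$, and let $\affr N$ be any rotation symmetric eigennet of $\M M$ of frequency $f$ with eigenvalue $\mu$. Its segment angle is exactly $\varphi_{\affr N} = \varphi$, so in particular $\varphi_{\affr N} \in [\varphi,\pi]$ and Theorem~\ref{SATZ:EigenschaftLimesnetzes} applies verbatim: either $\affr N$ is a multiple of $\affr M_\infty$ (in which case $\mu = \lambda$) or $|\mu| < \lambda$. This already shows $\lambda$ is an eigenvalue of a frequency-$f$ eigennet (namely $\affr M_\infty$ itself, which by Corollary~\ref{FOLGERUNG:KonvergenzMk}(d) is symmetric with segment angle $\varphi$, hence rotation symmetric of frequency $f$) and that no frequency-$f$ eigenvalue strictly exceeds it in modulus. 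For the frequency $m-f$ part, I would simply take complex conjugates: if $\affr N$ is a frequency-$f$ eigennet with eigenvalue $\mu$, then $\overline{\affr N}$ is a frequency-$(m-f)$ eigennet with eigenvalue $\overline{\mu}$, so the spectral picture for frequency $m-f$ is the mirror image and $\lambda$ (which is real and positive, hence equals $\overline\lambda$) is dominant there as well. One should also rule out that a \emph{generalized} eigenvector of frequency $f$ attached to an eigenvalue of modulus $\lambda$ other than $\lambda$ could spoil dominance; but any such eigenvalue $\mu$ with $|\mu| = \lambda$ would, through its genuine eigenvector, contradict Theorem~\ref{SATZ:EigenschaftLimesnetzes} unless $\mu = \lambda$, so the only eigenvalue of modulus $\lambda$ in the frequency-$f$ block is $\lambda$ itself.

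\textbf{Main obstacle.} The genuinely new content is all in Theorem~\ref{SATZ:EigenschaftLimesnetzes}, so the remaining work here is essentially bookkeeping: verifying that "dominant eigenvalue associated with frequency $f$" is precisely what the quantifier "any rotation symmetric eigennet with segment angle in $[\varphi,\pi]$" delivers once we observe that frequency $f$ forces segment angle exactly $\varphi$. The one subtle point I would be careful about is the hypothesis $\varphi \in (0,\pi)$ in Theorem~\ref{SATZ:EigenschaftLimesnetzes} and its supporting lemmata: this is why the corollary is phrased for frequencies $f$ with $2\pi f/m \in (0,\pi)$, i.e., $0 < f < m/2$, and why the conjugate frequencies $m-f$ are handled by conjugation rather than directly — a frequency $f > m/2$ would have $\varphi > \pi$ and fall outside the range where the comparison machinery was established. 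I expect no other difficulty; the corollary is a direct corollary in the literal sense.
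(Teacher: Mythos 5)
Your proposal is correct and follows essentially the same route as the paper: the block-cyclic structure of $\M M$ yields a basis of rotation symmetric (generalized) eigenvectors with well-defined frequencies, Theorem~\ref{SATZ:EigenschaftLimesnetzes} then gives dominance of $\lambda$ within frequency $f$ (since frequency $f$ forces segment angle exactly $\varphi$), and conjugation transfers the result to frequency $m-f$. Your extra remarks on generalized eigenvectors and on why $f>m/2$ must be handled by conjugation are sound elaborations of what the paper leaves implicit.
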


For $m=4$, the eigenvalues $\lambda_\varphi$ are well-known. Around a regular vertex, the midpoint subdivision surface consists of four polynomial patches and subdividing these means to map any polynomial $p(z)$ onto $p(z/2)$. Hence the eigenfunctions of $\M M$ are monomials and the eigenvalues are powers of $1/2$. Since the least monomials with frequency $0$ and $2$ are quadratic, besides the constant $1$, we obtain that $\mu_0 = 1/4$ is the subdominant eigenvalue for the eigenspace with frequency $0$ and that $\lambda_\pi = 1/4$. 
Furthermore, Theorem~\ref{SATZ:EigenschaftLimesnetzes} implies

\begin{equation}\label{EQ:Lambda}
\lambda_\alpha > \lambda_\beta > \lambda_\pi = \mu_0 = 1/4 \quad \mbox{for} \quad 0 < \alpha < \beta < \pi\;.
\end{equation}
Consequently $\lambda_{2\pi/m}$ is the subdominant eigenvalue of $\M M$.

\begin{SATZ}\label{SATZ:DominanteEigenwerte}
\textbf{\emph{(The subdominant eigenvalue of $M_n$ restricted to \mbox{core meshes)}}}\\
The eigenvalue $\lambda_{2\pi/m}$ of $M_n$ restricted to core meshes is subdominant and has algebraic and geometric multiplicity $2$.
\end{SATZ}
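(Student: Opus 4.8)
The plan is to derive Theorem~\ref{SATZ:DominanteEigenwerte} from Corollary~\ref{FOLGERTUNG:LambdaPhi} and Equation~\ref{EQ:Lambda} essentially by bookkeeping over the block-cyclic (rotation-symmetric) decomposition of $\M M$. Recall that $\M M$ splits into blocks indexed by frequencies $f\in\{0,\ldots,m-1\}$, and the block for frequency $f$ and the block for frequency $m-f$ are conjugate, hence have the same spectrum. By Corollary~\ref{FOLGERTUNG:LambdaPhi}, the dominant eigenvalue of the frequency-$f$ block is $\lambda_{2\pi f/m}$ (reading the segment angle modulo $2\pi$, i.e. using $\varphi=2\pi\min(f,m-f)/m$ after the reflection identification built into the corollary). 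First I would note that $f=0$ contributes the eigenvalue $1$ (the stochastic/dominant eigenvalue of $\M M$, with the next one being $\mu_0=1/4$ by Equation~\ref{EQ:Lambda}), so $1$ is simple and strictly dominant, and every other eigenvalue of $\M M$ has modulus at most $\max_{1\le f\le m-1}\lambda_{2\pi f/m}$.

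Next I would single out the frequency pair $f=1$ and $f=m-1$: both blocks have dominant eigenvalue $\lambda_{2\pi/m}$, and by Equation~\ref{EQ:Lambda} applied with $\alpha=2\pi/m<\beta=2\pi f/m$ for $2\le f\le m-2$ (so that $\min(f,m-f)\ge 2$, using $m\ge3$; the cases $m=3,4$ are degenerate and handled directly, with $m=4$ already discussed in the text), we get $\lambda_{2\pi f/m}<\lambda_{2\pi/m}$ for those $f$. Hence among all frequency blocks other than $f=0$, the largest eigenvalue modulus is exactly $\lambda_{2\pi/m}$, attained only in the two blocks $f=1$ and $f=m-1$. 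Since $1/4=\lambda_\pi\le\lambda_{2\pi/m}$ with strict inequality for $m>4$ (and equality only for $m=4$, which is the regular case excluded from "extraordinary"), the value $\lambda_{2\pi/m}$ is strictly below $1$ and strictly above every competing eigenvalue, i.e. it is subdominant.

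For the multiplicity count I would argue separately in each of the two blocks and then add. In the frequency-$1$ block, $\affr M_\infty$ (rotation symmetric with frequency $1$) is an eigenvector for $\lambda_{2\pi/m}=\lambda_\varphi$ with $\varphi=2\pi/m$; by Theorem~\ref{SATZ:EigenschaftLimesnetzes}, any other eigennet in that block either is a multiple of $\affr M_\infty$ or has eigenvalue of strictly smaller modulus, so $\lambda_{2\pi/m}$ has geometric multiplicity $1$ in the frequency-$1$ block. To upgrade this to algebraic multiplicity $1$ in that block, I would rule out a generalized eigenvector: if $\M M\v w=\lambda_{2\pi/m}\v w+\affr M_\infty$, then $\M M^k\v w=\lambda_{2\pi/m}^k\v w+k\lambda_{2\pi/m}^{k-1}\affr M_\infty$, so the normalized iterates $\M M^k\v w/\Vert\M M^k\v w\Vert$ converge to $\pm\affr M_\infty$; comparing with the already-established convergence $\affr M_k\to\affr M_\infty$ and the monotone-bracketing machinery of Theorem~\ref{SATZ:KonvMk} (the growth of $\M M^k\v w$ is $\Theta(k\lambda^k)$, too fast to be squeezed between fixed multiples of $\lambda^k\affr M_\infty$) forces a contradiction with the maximality/minimality of the bracketing sequences there — this is the same "no Jordan block at the subdominant eigenvalue" argument used implicitly for $\lambda=1$. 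The conjugate (frequency $m-1$) block contributes another copy of everything by complex conjugation. Adding the two blocks gives algebraic and geometric multiplicity exactly $2$ for $\lambda_{2\pi/m}$ as an eigenvalue of $\M M$ on core meshes.

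The main obstacle I anticipate is the algebraic-multiplicity part, i.e. excluding a $2\times2$ Jordan block at $\lambda_{2\pi/m}$ within a single frequency block: the geometric statement from Theorem~\ref{SATZ:EigenschaftLimesnetzes} only controls honest eigenvectors, so ruling out generalized eigenvectors requires re-running the bracketing argument of Theorem~\ref{SATZ:KonvMk} with the extra polynomial factor $\binom{k}{l}$ in $\M M^k\v v_i^l$ and checking carefully that positivity of $\affr M_\infty$ (Lemma~\ref{LEMMA:K>0}, Corollary~\ref{FOLGERUNG:KonvergenzMk}(b)) still lets one trap the iterates between $a_k\lambda^k\affr M_\infty$ and $b_k\lambda^k\affr M_\infty$, which now fails precisely because of the unbounded factor $k$ — that failure is exactly what yields the contradiction, but it has to be phrased so that Lemma~\ref{LEMMA:VergleichbareEigenschaftenZweierSymmetrischenNetzen}(a),(b) are applied to genuinely non-negative comparands. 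Everything else is a routine reduction over the finite index set of frequencies together with Equation~\ref{EQ:Lambda}.
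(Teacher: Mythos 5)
Your proposal is correct and follows essentially the same route as the paper: subdominance and the geometric multiplicity $2$ come from the block-cyclic frequency decomposition together with Corollary~\ref{FOLGERTUNG:LambdaPhi} and Inequality~\ref{EQ:Lambda}, and the algebraic multiplicity is settled by showing that a generalized eigenvector $\affr H$ at $\lambda_{2\pi/m}$ is incompatible with order preservation because of the unbounded factor $k$ in $\M M^k \affr H = \lambda^k \affr H + k\lambda^{k-1}\affr M_\infty$. The one detail you flag but leave open --- making the comparison lemma applicable to the putative generalized eigenvector --- is resolved in the paper simply by replacing $\affr H$ with its symmetrization $\frac{1}{2}(\affr H + \widetilde{\overline{\affr H}})$, after which positivity of $\affr M_\infty$ gives $\alpha\affr M_\infty - \affr H \ge \v 0$ and Lemma~\ref{LEMMA:VergleichbareEigenschaftenZweierSymmetrischenNetzen}\,(a) yields the contradiction directly, with no need to re-run the bracketing machinery of Theorem~\ref{SATZ:KonvMk}.
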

\begin{proof}
{
It remains to show that $\lambda=\lambda_{2\pi/m}$ has algebraic multiplicity $2$ or that there is no generalized eigenvector associated with $\lambda$. We assume that $\affr H$ is a generalized eigenvector with frequency $1$ and associated with $\lambda$, i.\,e., $\M M \affr H = \lambda \affr H + \affr M_\infty$. Then
\[\affr H_s = \frac{1}{2}(\affr H + \widetilde{\overline{\affr H}})\]
is a symmetric generalized eigenvector of $\M M$ with frequency $1$ because 
\begin{eqnarray*}
\M M \affr H_s &=& (\M M \affr H + \M M  \widetilde{\overline{\affr H}})/2
= (\lambda \affr H + \affr M_\infty + \widetilde{\overline{\lambda \affr H + \affr M_\infty}})/2\\
&=& (\lambda \affr H + \affr M_\infty + \lambda \widetilde{\overline{\affr H}} + \affr M_\infty)/2
= \lambda \affr H_s + \affr M_\infty\;.
\end{eqnarray*}
Therefore, it can be presupposed that $\affr H$ is symmetric. Because $\affr M_\infty> \v 0$, there exists an $\alpha>0$ with $\alpha \affr M_\infty \geq \affr H$. The ringnet $\alpha \affr M_\infty - \affr H$ is symmetric of frequency $1$ and is non-negative. From Lemma~\ref{LEMMA:VergleichbareEigenschaftenZweierSymmetrischenNetzen}\,(a) it follows that
\begin{equation}\label{EQ:Hauptvektor}
\M M^k (\alpha \affr M_\infty - \affr H) \geq \v 0 
\end{equation}
for all $k\geq 0$.
Furthermore, we have
\begin{eqnarray*}
\M M^k (\alpha \affr M_\infty - \affr H) &=& \alpha \M M^k \affr M_\infty - \M M^k \affr H\\
&=& \alpha \lambda^k \affr M_\infty - \left(\lambda^k \affr H + k \lambda^{k-1} \affr M_\infty\right)\\ 
&=& \lambda^{k} \left( \left(\alpha-\frac{k}{\lambda}\right) \affr M_\infty -  \affr H \right)\\
&<&\v 0
\end{eqnarray*}
for all sufficiently large $k$, which contradicts \ref{EQ:Hauptvektor}.
Therefore, the algebraic multiplicity of $\lambda$ is $2$.
}
\end{proof}

\section{Subdivision of larger ringnets}\label{SECTION:LargerRingnets}
As shown in Section~\ref{SECTION:Subdominance}, subdividing a core grid mesh with frequency $1$ leads to the subdominant eigenvectors of $M_n$ restricted to core meshes.
In this section, we generalize this result to arbitrary large grid meshes. 

We consider a $j$-net $\affr N$ with $j\geq \omega+1$, where the control points are arranged so that 
\[\affr N = \left[ \begin{array}{l} \affr N_{0..\omega} \\ \affr N_b \\ \affr N_c \\ \affr N_{\omega+2..j}\end{array} \right]\;,\]
where $\affr N_c$ consists of the corners of the ring $\affr N_{\omega+1}$, i.\,e.,
$\affr N_c = \{\v p^{i}_{\omega+1,\omega+1} \;|\; i \in \mathbb Z_m\}$ if $\affr N$ is primal or
$\affr N_c = \{\v p^{i}_{\omega+2, \omega+2} \;|\; i \in \mathbb Z_m\}$ if $\affr N$ is dual, and $\affr N_b$ consists of the other points in $\affr N_{\omega+1}$. 
With this arrangement of the control points and according to Lemma~\ref{Lemma:BeeinflussDerRingeNachEinerUnterteilung}, the matrix of $\M M_n$ restricted to $j$-nets has the lower triangular form
\[\M S = \left[ \begin{array}{cccccc} \M A\\ * & \M B\\ * & * & \M C \\  * & * & * & 0\\
\vdots & & & \ddots & \ddots\\
  * & \hdots & \hdots & \hdots & * & 0
\end{array} \right]\;,\]
where
\[(\M M_n \affr N)_{0..\omega} = \M A \affr N_{0..\omega}\;,\]
\[\left[ \begin{array}{l} (\M M_n \affr N)_{0..\omega} \\ (\M M_n\affr N)_b \end{array} \right] =  \left[ \begin{array}{cc} \M A \\ * & \M B \end{array} \right]\left[ \begin{array}{l} \affr N_{0..\omega} \\ \affr N_b \end{array} \right]\;,\]
\[(\M M_n \affr N)_{0..\omega+1} = \left[ \begin{array}{ccc} \M A\\ * & \M B\\ * & * & \M C\end{array} \right] \affr N_{0..\omega+1}\;.\]
Hence, the eigenvalues of $\M S$ are zero or are the eigenvalues of the blocks $\M A$, $\M B$ and $\M C$.

\begin{LEMMA}\label{LEMMA:ZeilensummenVonC}
\textbf{\emph{(Spectral radii of $\M B$ and $\M C$)}}\\
The spectral radii $\rho_{\M B}$ and $\rho_{\M C}$ of $\M B$ and $\M C$ satisfy
\[\rho_{\M B} \leq 2^{-n} \quad \mbox{and} \quad  \rho_{\M C}\leq 4^{-n}\;.\]
\end{LEMMA}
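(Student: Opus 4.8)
The plan is to estimate the spectral radii of $\M B$ and $\M C$ via the row sums (equivalently the $\infty$-norm) of suitable powers of these matrices, using the factorization $\M M_n = \M A^{n-1}\M R$ and the geometric decay that each averaging step $\M A$ introduces on the ``off-center'' rings. Recall that $\M B$ represents the action of $\M M_n$ on the non-corner points of ring $\affr N_{\omega+1}$ modulo the lower rings, and $\M C$ the action on the corner points of that ring modulo everything below; so a row of $\M B$ (resp. $\M C$) records the weights a new $(\omega{+}1)$-ring point receives from old $(\omega{+}1)$-ring points only. Because all masks of $\M R$ and $\M A$ are convex (nonnegative weights summing to $1$), every such row sum is at most $1$, and the point is to show it is in fact $\le 2^{-n}$ (resp. $\le 4^{-n}$): each of the $n-1$ averaging steps and the refinement step strictly ``leaks'' weight toward the center, and we must quantify this leak.

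First I would set up the bookkeeping for a single step. For the refinement operator $\M R$: a new vertex of $(\M R\affr N)$ lying in a ring with index $\ge 2(\omega{+}1)-1$ is a face-center or edge-midpoint whose defining old vertices include at least one from a strictly lower ring (this is exactly the influence bound $2i-1\le j\le 2i+1$ from Lemma~\ref{Lemma:BeeinflussDerRingeNachEinerUnterteilung}), so the weight it collects from the top ring alone is at most $\tfrac12$ for an edge midpoint and at most $\tfrac34$ — actually I would sharpen this: on a regular grid an edge midpoint gets weight $\tfrac12$ from each endpoint and a face center gets $\tfrac14$ from each of four corners, so the fraction of weight coming from vertices in the single top ring is bounded by $\tfrac12$. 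For the averaging operator $\M A$: a new vertex is a centroid of the (four) adjacent face-centers, and again, for a vertex living on the outermost ring, at least one of those face centers already involved a lower-ring vertex, so $\M A$ contributes a further factor $\le\tfrac12$ on the top ring. Composing, $\M M_n=\M A^{n-1}\M R$ contributes a factor $\le (\tfrac12)^{n-1}\cdot\tfrac12 = 2^{-n}$ to any row sum of $\M B$. This gives $\|\M B\|_\infty\le 2^{-n}$ and hence $\rho_{\M B}\le 2^{-n}$. For $\M C$ (the corner points) the geometry is more favorable: a corner of ring $\omega{+}1$ is the extreme point of a ``$45^\circ$ diagonal'', and I expect that each of the two grid directions independently loses a factor $\tfrac12$ per $\M A$-step and per $\M R$-step, so one picks up $4^{-1}$ per step in the relevant direction; carefully one gets $\rho_{\M C}\le 4^{-n}$.

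The clean way to make the above rigorous is via the $\infty$-norm of $\M S$-powers rather than chasing individual masks. Concretely: iterate once and observe $(\M M_n\affr N)_{\omega+1}$ depends on $\affr N_{0..\omega+1}$ but the part depending on $\affr N_{\omega+1}$ itself — i.e. the blocks $\M B,\M C$ — acts on data that, under one further step, is pushed toward the center and mixed with zeros, so $\M B^k$ and $\M C^k$ have $\infty$-norms decaying like $(2^{-n})^k$ and $(4^{-n})^k$; then $\rho = \lim_k \|\cdot^k\|_\infty^{1/k}$ gives the claim. Equivalently, and perhaps cleanest for the write-up: exhibit a strictly positive vector $\v w$ (with entries growing geometrically toward the outer ring, e.g. $w$ proportional to $2^{(n)\cdot(\text{ring index})}$ or a product of $2^{i}$-type weights in the two grid directions) such that $\M B\v w \le 2^{-n}\v w$ and $\M C\v w\le 4^{-n}\v w$ entrywise; by the Collatz–Wielandt / Perron–Frobenius bound this forces $\rho_{\M B}\le 2^{-n}$ and $\rho_{\M C}\le 4^{-n}$. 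Constructing this witness vector amounts to checking the two one-step mask inequalities above at each (regular, outer-ring) vertex type.

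The main obstacle I anticipate is not the crude convexity bound but pinning down the \emph{exact} geometric constants $2^{-n}$ and $4^{-n}$ — i.e. proving that the leak per averaging step is genuinely a factor $\tfrac12$ on $\M B$-rows (and $\tfrac12$ per direction, giving $\tfrac14$, on $\M C$-rows) and that these factors compound multiplicatively across the $n$ operators of $\M M_n$ without any accumulation of boundary-effect error. This requires being precise about \emph{which} ring each intermediate vertex of $\M R\affr N,\ \M A\M R\affr N,\dots$ lands in and arguing that a top-ring vertex's mask always reaches down at least one ring at \emph{every} stage; the index arithmetic of Lemma~\ref{Lemma:BeeinflussDerRingeNachEinerUnterteilung} (the bounds $2i-\lfloor\frac{n+1}2\rfloor\le j$ and the $\M A^2$-, $\M R$-, $\M A\M R$-versions in its proof) is exactly what licenses this, so the proof will be an induction on the number of applied operators, carried out separately along the two grid directions for the sharper $\M C$-bound. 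Once the per-step factor is nailed down, the conclusion for $\rho_{\M B}$ and $\rho_{\M C}$ is immediate from $\rho(\M T)\le\|\M T\|_\infty$.
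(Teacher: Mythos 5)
Your proposal follows essentially the same route as the paper: bound $\rho_{\M B}$ and $\rho_{\M C}$ by the $\infty$-norm of the nonnegative blocks, i.e.\ by the row sums $\Vert \M B \v 1\Vert_\infty$ and $\Vert \M C \v 1\Vert_\infty$, and then verify these by tracking the mask weights of $\M A^{n-1}\M R$ on the $(\omega+1)$-ring. The paper dispatches that verification with a one-line ``one can easily verify''; your per-operator accounting (a leak factor $\tfrac12$ per operator for the boundary points and $\tfrac12$ in each of the two grid directions, hence $\tfrac14$, for the corners) is a reasonable sketch of exactly that omitted computation.
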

\begin{proof}
{
Since $\M C$ is non-negative, we get \cite[Corollary 6.1.5 on Page 346]{HJ85}
\[\rho_{\M C} \le \Vert \M C \Vert_\infty = \Vert \M C \v 1\Vert_\infty, \quad\mbox{where} \quad \v 1 := [1\;\ldots\;1]^{\mathrm{t}}\;.\]
The vector $\M C \v 1$ represents the corners of the $(\omega+1)$-ring $\affr N_{\omega+1}'$, where $\affr N_{0..\omega} = \v 0$, $\affr N_{\omega+2..j} = \v 0$, $\affr N_b = \v 0$ and $\affr N_c = \v 1$. One can easily verify that the vertices of  these corners are  $\le 4^{-n}$, which concludes the proof of the second statement. The first statement can be proven similarly.
}
\end{proof}

According to Theorem~\ref{SATZ:DominanteEigenwerte}, Inequality~\ref{EQ:Lambda} and Lemma~\ref{LEMMA:ZeilensummenVonC}, we get
\begin{SATZ}\label{SATZ:DominanteEigenwerteLargerNets}
\textbf{\emph{(The subdominant eigenvalue of $M_n$ restricted to \mbox{$j$-nets} for $j\ge \omega+1$)}}\\
The subdominant eigenvalue of $M_n$ restricted to $j$-nets with $j\ge \omega+1$ is $\lambda_{2\pi/m}$ and $\lambda_{2\pi/m}$ has algebraic and geometric multiplicity $2$.
\end{SATZ}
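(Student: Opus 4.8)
The plan is to exploit the lower triangular block structure $\M S = \left[\begin{smallmatrix} \M A \\ * & \M B \\ * & * & \M C \\ * & * & * & 0 \\ & & & \ddots \end{smallmatrix}\right]$ that was just established, together with the spectral information already available for each diagonal block. Recall that the eigenvalues of $\M S$ are exactly the eigenvalues of $\M A$, $\M B$ and $\M C$ (plus possibly $0$ from the trailing zero blocks). Theorem~\ref{SATZ:DominanteEigenwerte} tells us that $\M A$ (the operator on core meshes) has dominant eigenvalue $1$ with multiplicity $1$ and subdominant eigenvalue $\lambda_{2\pi/m}$ with algebraic and geometric multiplicity $2$, and Inequality~\ref{EQ:Lambda} gives $\lambda_{2\pi/m} > \lambda_\pi = 1/4$. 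Lemma~\ref{LEMMA:ZeilensummenVonC} gives $\rho_{\M B} \le 2^{-n}$ and $\rho_{\M C} \le 4^{-n}$. So the first step is simply to check the numeric inequality $2^{-n} < \lambda_{2\pi/m}$ (and a fortiori $4^{-n} < \lambda_{2\pi/m}$) for every relevant $n$: since $\lambda_{2\pi/m} > 1/4 = 2^{-2}$ and $n \ge 2$ would only give $2^{-n} \le 1/4$, one needs a hair more care — but the point is that $\lambda_{2\pi/m} > 1/4 \ge 2^{-n}$ for $n \ge 2$ with equality impossible when $n=2$ because then $\omega = 0$ and the core mesh is a single point, so in fact for $n = 2$ the $\M A$-block is trivial and the subdominant eigenvalue lives in $\M C$; I would handle $n=2$ (Doo–Sabin) separately or note that $\lambda_{2\pi/m}$ is then literally realized as an eigenvalue of $\M B$ or $\M C$. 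For $n \ge 3$ we have $\omega \ge 1$, $\lambda_{2\pi/m} > 1/4 \ge 2^{-n}$ only when $n = 2$, so I must instead argue $\lambda_{2\pi/m} > 1/4 > 2^{-n}$ fails — hence the honest statement is $2^{-n} \le 1/4 < \lambda_{2\pi/m}$ for $n \ge 2$, with $2^{-n} = 1/4$ only at $n=2$; since for $n=2$ the block $\M A$ is $1\times 1$ this causes no conflict. This bookkeeping is the first thing to nail down precisely.

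The second step, granting the eigenvalue separation, is to conclude that all eigenvalues of $\M S$ other than $1$ and the two copies of $\lambda_{2\pi/m}$ coming from $\M A$ have modulus strictly less than $\lambda_{2\pi/m}$: the $\M B$-eigenvalues are $\le 2^{-n}$ in modulus, the $\M C$-eigenvalues are $\le 4^{-n}$, the trailing blocks contribute only $0$, and the remaining $\M A$-eigenvalues are either $1$ or, by Theorem~\ref{SATZ:DominanteEigenwerte} and Inequality~\ref{EQ:Lambda}, of modulus $< \lambda_{2\pi/m}$. Hence $\lambda_{2\pi/m}$ is the subdominant eigenvalue of $\M S$, i.e.\ of $M_n$ restricted to $j$-nets, and its algebraic multiplicity is at least $2$.

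The third and slightly more delicate step is to show that the algebraic (and geometric) multiplicity of $\lambda_{2\pi/m}$ as an eigenvalue of $\M S$ is \emph{exactly} $2$ — i.e.\ that passing from the core mesh to the larger $j$-net does not spawn any new eigenvectors or generalized eigenvectors for this eigenvalue. Because $\lambda_{2\pi/m}$ is strictly larger than $\rho_{\M B}$ and $\rho_{\M C}$, it is not an eigenvalue of $\M B$ or of $\M C$ at all, so the blocks $(\lambda_{2\pi/m}\M I - \M B)$ and $(\lambda_{2\pi/m}\M I - \M C)$ are invertible. The standard linear-algebra fact for block triangular matrices then gives that the algebraic multiplicity of $\lambda_{2\pi/m}$ in $\M S$ equals its algebraic multiplicity in the single block $\M A$, which is $2$ by Theorem~\ref{SATZ:DominanteEigenwerte}. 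For the geometric multiplicity, one checks that any eigenvector of $\M S$ for $\lambda_{2\pi/m}$, written in the block decomposition $[\affr N_{0..\omega}, \affr N_b, \affr N_c, \affr N_{\omega+2..j}]^{\mathrm t}$, must have its $\affr N_{\omega+2..j}$ part zero (the trailing zero blocks force $\lambda_{2\pi/m}\,\affr N_{\omega+2..j} = 0$), then its $\affr N_c$ part zero (since $\lambda_{2\pi/m}$ is not an eigenvalue of $\M C$), then its $\affr N_b$ part zero (same for $\M B$), so it is determined by an eigenvector of $\M A$; hence geometric multiplicity $2$ as well.

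The main obstacle I anticipate is purely the delicate numeric comparison in the first step: the clean bound from Lemma~\ref{LEMMA:ZeilensummenVonC} is $2^{-n}$, and the crude lower bound on $\lambda_{2\pi/m}$ from Inequality~\ref{EQ:Lambda} is $1/4 = 2^{-2}$, so these only strictly separate for $n \ge 3$, and the case $n = 2$ (where $\omega = 0$, the core mesh degenerates to the extraordinary face, and Doo–Sabin is already classically understood) must be argued on its own or folded in by observing that the $\M A$ block is then a scalar and the relevant eigenvalue genuinely sits in $\M B$. Everything else is a routine application of block triangular spectral theory to the already-assembled pieces.
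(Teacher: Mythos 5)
Your overall route is the paper's own: the paper proves this theorem with a one-line remark that it follows from Theorem~\ref{SATZ:DominanteEigenwerte}, Inequality~\ref{EQ:Lambda} and Lemma~\ref{LEMMA:ZeilensummenVonC}, which is exactly the block-triangular bookkeeping you spell out. Two local points in your write-up are wrong, however, and one of them is a step that would fail as stated.

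First, the $n=2$ case is a non-issue and your proposed escape is factually incorrect. Inequality~\ref{EQ:Lambda} gives the \emph{strict} inequality $\lambda_{2\pi/m}>\lambda_\pi=1/4$ for every $m\ge 3$, so $\rho_{\M C}\le 4^{-n}\le\rho_{\M B}\le 2^{-n}\le 1/4<\lambda_{2\pi/m}$ holds uniformly for all $n\ge 2$, with no case distinction. Moreover, for $n=2$ the net is dual and the core mesh is the $0$-ring, i.e.\ the $m$ vertices of the extraordinary face, so the block $\M A$ is $m\times m$, not scalar; by Theorem~\ref{SATZ:DominanteEigenwerte} the eigenvalue $\lambda_{2\pi/m}$ still sits in $\M A$ and cannot ``live in $\M B$ or $\M C$'', whose spectral radii are at most $1/4<\lambda_{2\pi/m}$.

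Second, your geometric-multiplicity argument reads the triangular structure in the wrong direction. The matrix $\M S$ is \emph{lower} block triangular, so the eigenvector equation in the trailing block reads $\lambda_{2\pi/m}\,\affr N_{\omega+2..j}=(\mbox{terms involving the earlier blocks})$, which determines $\affr N_{\omega+2..j}$ but certainly does not force it to vanish; indeed the subdominant eigennets are the characteristic meshes, which by Theorem~\ref{SATZ:CharMesh} are positive on \emph{all} rings, so an eigenvector with $\affr N_c=\affr N_b=\affr N_{\omega+2..j}=\v 0$ does not exist. The correct argument runs top-down: $\affr N_{0..\omega}$ must be a $\lambda_{2\pi/m}$-eigenvector of $\M A$ (it cannot be zero, since otherwise the remaining equations, with $\lambda_{2\pi/m}\M I-\M B$, $\lambda_{2\pi/m}\M I-\M C$ and $\lambda_{2\pi/m}\M I$ invertible, would force the whole vector to vanish), and then $\affr N_b$, $\affr N_c$ and $\affr N_{\omega+2..j}$ are uniquely determined by it. This bijection between the two eigenspaces, together with the standard algebraic-multiplicity count for block triangular matrices that you correctly invoke, yields geometric and algebraic multiplicity $2$. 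With these two repairs your proposal coincides with the intended proof.
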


\begin{SATZ}\label{SATZ:CharMesh}
\textbf{\emph{(The characteristic mesh of $\M M_n$)}}\\
Let $\affr N$ be a grid mesh with frequency $1$ and with $\rho+1$ rings, where $\rho = \left\lceil \frac{3}{2}n - \frac{3}{2} \right\rceil$, see Equation~\ref{EQ:RHO}. Let $\M S_\rho$ be the matrix of $\M M_n$ restricted to $\rho$-nets. Let 
\[n_k := \Vert \M S_\rho^k \affr N \Vert \quad \mbox{and} \quad
\affr N_k := \M S_\rho^k \affr N/n_k\;.\]
Then for $n\ge 2$, $(\affr N_k)$ converges  to a positive subdominant eigennet, called the \emph{characteristic mesh} of $\M M_n$.
\end{SATZ}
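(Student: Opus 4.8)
The plan is to deduce the convergence of $(\affr N_k)$ from an ordinary power iteration inside the frequency-$1$ invariant subspace of $\M S_\rho$, and to obtain positivity by transporting the $\mathrm{MIN}$-$\mathrm{MAX}$ estimates of Sections~\ref{SECTION:ComparingRingnets} and \ref{SECTION:KonvergenzDerFolge1} from core meshes to the full $\rho$-net. Since $\rho=\left\lceil\frac{3}{2}n-\frac{3}{2}\right\rceil\ge\omega+1$ for every $n\ge2$, Theorem~\ref{SATZ:DominanteEigenwerteLargerNets} shows that $\lambda:=\lambda_{2\pi/m}$ is the subdominant eigenvalue of $\M S_\rho$, with algebraic and geometric multiplicity $2$. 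Because $\M M_n$ is rotation symmetric, $\M S_\rho$ is block-cyclic, the frequency subspaces are invariant, the dominant eigenvalue $1$ belongs to frequency~$0$, and the two copies of $\lambda$ split between the conjugate frequencies $1$ and $m-1$; hence $\lambda$ is a \emph{simple} eigenvalue of the restriction $\M S_\rho|_{V_1}$ of $\M S_\rho$ to the frequency-$1$ subspace $V_1$. That it is also the \emph{strictly} dominant one I would read off the triangular form of Section~\ref{SECTION:LargerRingnets}: for the core block $\M A$, Corollary~\ref{FOLGERTUNG:LambdaPhi} and Theorem~\ref{SATZ:EigenschaftLimesnetzes} show that every frequency-$1$ eigennet of $\M A$ is a multiple of $\affr M_\infty$ (eigenvalue $\lambda$) or has an eigenvalue of modulus $<\lambda$, while Theorem~\ref{SATZ:DominanteEigenwerte} rules out a Jordan block; for the tail blocks, Lemma~\ref{LEMMA:ZeilensummenVonC} gives $\rho_{\M B}\le 2^{-n}$ and $\rho_{\M C}\le 4^{-n}$, which are $<\lambda$ by Inequality~\ref{EQ:Lambda}.

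The grid mesh $\affr N$ has frequency~$1$, hence $\affr N\in V_1$, and its core restriction $\affr N_{0..\omega}$ is itself a grid mesh of frequency~$1$ and segment angle $2\pi/m\in(0,\pi)$. By Lemma~\ref{Lemma:BeeinflussDerRingeNachEinerUnterteilung}, $(\M S_\rho^k\affr N)_{0..\omega}=\M A^k\affr N_{0..\omega}$, so Theorem~\ref{SATZ:KonvMk} and Corollary~\ref{FOLGERUNG:KonvergenzMk} give that $\M A^k\affr N_{0..\omega}$, normalized, converges to the positive symmetric eigennet $\affr M_\infty$ of eigenvalue $\lambda$; in particular the $\affr M_\infty$-component of $\affr N_{0..\omega}$ is nonzero. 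Since $\lambda\notin\mathrm{spec}(\M D)$, where $\M D$ denotes the lower-triangular tail of $\M S_\rho$ with diagonal blocks $\M B,\M C,0,\dots,0$, the $\lambda$-eigennet $\v W$ of $\M S_\rho|_{V_1}$ has $\affr M_\infty$ as its core block, so the coefficient $c$ of $\affr N$ along $\v W$ coincides with the $\affr M_\infty$-component of $\affr N_{0..\omega}$; reflection symmetry makes $c$ real, and matching the positive core limit with $\lambda>0$ makes it positive. Writing $\affr N=c\,\v W+(\text{lower modes})$ and running the power iteration yields $\M S_\rho^k\affr N=c\lambda^k\v W+\mathrm{o}(\lambda^k)$, so $n_k\neq0$ for large $k$ and $\affr N_k\to\affr N_\infty:=\v W/\Vert\v W\Vert$, an eigennet of $\M S_\rho$ for the subdominant eigenvalue $\lambda$.

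It remains to prove $\affr N_\infty>\v0$. A grid mesh satisfies $H(\affr N)\subset C(\affr N)$, a property preserved by Lemma~\ref{LEMMA:PosInvariance}, so $H(\affr N_\infty)\subset C(\affr N)$; since $\varphi=2\pi/m<\pi$, the cone $C(\affr N)$ lies in the closed first quadrant of $\koord K$ and meets its boundary only at the origin, whence $\affr N_\infty\ge\v0$ and it suffices to show $\mathrm{MIN}(\affr N_\infty)>0$. Here I would combine three observations, each a mild extension of earlier material: (i) because $\rho\le 2\omega+\left\lceil\frac{n+1}{2}\right\rceil$, Lemma~\ref{Lemma:BeeinflussDerRingeNachEinerUnterteilung} makes every vertex of $(\M M_n\affr N')_{0..\rho}$ depend on some vertex in a core ring of $\affr N'$, with a weight bounded below by a constant $w_0>0$ (the mask of $\M M_n$ has finitely many, necessarily positive, entries); (ii) the core rings of $\affr N^{(k)}:=\M S_\rho^k\affr N$ stay comparable in norm to $\mathrm{MAX}(\affr N^{(k)})$, because $\M A^k\affr N_{0..\omega}$ normalized tends to $\affr M_\infty>\v0$ while $\affr N^{(k)}\sim c\lambda^k\v W$; and (iii) $\mathrm{MAX}(\affr N^{(k)})$ is non-increasing in $k$, which is exactly the norm-non-increase step in the proof of Lemma~\ref{LEMMA:MIN-MAX-Beziehung} and uses only that the $\koord K$-axes are perpendicular to the spokes. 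Together these bound $\mathrm{MIN}(\affr N^{(k)})/\mathrm{MAX}(\affr N^{(k)})$ below by a positive constant for all large $k$, so $\mathrm{MIN}(\affr N_\infty)>0$ and $\affr N_\infty>\v0$, making $\affr N_\infty$ the desired positive subdominant eigennet, the characteristic mesh of $\M M_n$.

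The power-iteration half of the argument is routine once the structure of $\M S_\rho|_{V_1}$ is fixed, so I expect the positivity step to be the main obstacle: the influence bookkeeping of Lemma~\ref{Lemma:BeeinflussDerRingeNachEinerUnterteilung} and the $\mathrm{MIN}$-$\mathrm{MAX}$ control of Lemma~\ref{LEMMA:MIN-MAX-Beziehung} were set up for a core mesh, where the subdivision matrix eventually has a strictly positive power, and carrying them over to the whole $\rho$-net needs care near the outermost ring, where a vertex no longer influences the inner rings within a bounded number of subdivision steps.
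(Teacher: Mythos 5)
Your proof is correct, and its first half --- the power iteration on the frequency-$1$ invariant subspace, the nonvanishing of the component along the subdominant eigennet $\v W$ deduced from the convergence of the normalized core iterates to $\affr M_\infty$, and the elimination of the tail blocks via Lemma~\ref{LEMMA:ZeilensummenVonC} --- is essentially the paper's argument, only spelled out in more structural detail (the paper phrases it as: $n_k\ge m_k\in\Theta(\lambda^k)$ forces the limit's eigenvalue to be $1$ or $\lambda$ by Lemma~\ref{LEMMA:DarstellungHP1}, and a vanishing core block would force an eigenvalue $\le 2^{-n}<\lambda$, a contradiction). Where you genuinely diverge is the positivity step, which you rightly single out as the delicate point but then attack the hard way, by transporting the quantitative $\mathrm{MIN}$--$\mathrm{MAX}$ machinery of Lemma~\ref{LEMMA:MIN-MAX-Beziehung} to the whole $\rho$-net. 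That route does work: your inequality $\rho\le 2\omega+\left\lceil\frac{n+1}{2}\right\rceil$ is exactly what makes every vertex of the once-subdivided $\rho$-net depend on the core with a mask weight bounded below, and your closing worry about the outermost ring is unfounded, since what is needed is that the core influences the outer rings within one step, not the reverse. The paper, however, disposes of positivity in two lines by exploiting the eigenvector property itself: since $(\affr N_\infty)_{0..\omega}$ is a positive multiple of $\affr M_\infty$ and $\affr N_\infty\ge\v 0$, Lemma~\ref{LEMMA:VergleichbareEigenschaftenZweierSymmetrischenNetzen}\,(b), applied with the zero net as second argument and $j=\rho$, yields $(\M M_n^k\affr N_\infty)_{0..\rho}>\v 0$ for all sufficiently large $k$, and this iterate equals $\lambda^k\affr N_\infty$ with $\lambda>0$, whence $\affr N_\infty>\v 0$ at once. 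The paper's trick buys brevity and avoids any fresh quantitative estimate on the outer rings; your version buys an explicit positive lower bound on $\mathrm{MIN}(\affr N_\infty)/\mathrm{MAX}(\affr N_\infty)$, which the paper does not need.
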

\begin{proof}
{Let
$m_k := \Vert (\M S_\rho^k \affr N)_{0..\omega} \Vert$
and let
$\affr M_k := (\M S_\rho^k \affr N)_{0..\omega}/m_k$.
According to Theorem~\ref{SATZ:KonvMk} and Corollary~\ref{FOLGERUNG:KonvergenzMk}\,(b), the sequence $(\affr M_k)$ converges to a positive eigennet $\affr M_\infty$ with eigenvalue $\lambda=\lambda_{2\pi/m}$.
The dominant and subdominant eigenvalues $1$ and $\lambda$ of $\M S_\rho$ are positive, see Corollary~\ref{FOLGERUNG:KonvergenzMk}\,(a). Furthermore, since
\begin{equation}\label{thetalambda}
n_k \ge m_k \in \Theta(\lambda^k)\;,
\end{equation}
we obtain by Lemma~\ref{LEMMA:DarstellungHP1} that $(\affr N_k)$ converges to an eigenvector $\affr N_\infty$ with eigenvalue $1$ or $\lambda$.
If $(\affr N_\infty)_{0..\omega}$ were zero, then $\M S_\rho \affr N_\infty = \mu N_\infty$ with \mbox{$|\mu|\le 2^{-n}<\lambda$} according to Lemma~\ref{LEMMA:ZeilensummenVonC} and Inequality~\ref{EQ:Lambda}~, which contradicts~\ref{thetalambda}.
Consequently $(\affr N_\infty)_{0..\omega}$ is a multiple of $\affr M_\infty$ and is positive and $\lambda$ is its eigenvalue. 
Using Lemma~\ref{LEMMA:VergleichbareEigenschaftenZweierSymmetrischenNetzen}\,(b), we have $\lambda^k \affr N_\infty = M_n^k \affr N_\infty > \v 0$ for all sufficiently large $k$ and hence, $\affr N_\infty > \v 0$.
}
\end{proof}

\section{The characteristic map}\label{SECTION:CharacteristicMap}
Let $\affr C$ be the characteristic mesh of valence $m$ of the midpoint scheme $\M M_n$. It defines the control mesh of a characteristic map, which is a spline surface ring consisting of $m$ segments with $3$ subsegments of $\lfloor n/2 \rfloor^2$ polynomial patches, as illustrated in Figure~\ref{Abb:SplineSegment}.

\begin{SATZ}\label{SATZ:RegInjOfCharMap}
\textbf{\emph{(Characteristic map)}}\\
The characteristic map of $\M M_n$ is regular and injective for $n\ge 2$.
\end{SATZ}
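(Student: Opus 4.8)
The plan is to establish regularity (non-vanishing Jacobian) and injectivity of the characteristic map by exploiting the symmetry of the characteristic mesh $\affr C$ together with the monotonicity and positivity properties already developed. First I would recall that by Theorem~\ref{SATZ:CharMesh} the characteristic mesh $\affr C$ is a positive subdominant eigennet of frequency $1$, so it is rotation symmetric with frequency $1$ and reflection symmetric, and by Corollary~\ref{FOLGERUNG:KonvergenzMk} all its control points in $H(\affr C)$ have strictly positive coordinates in the coordinate system $\koord K$. Since the eigenvalue is $\lambda_{2\pi/m}$ of multiplicity exactly $2$ (Theorem~\ref{SATZ:DominanteEigenwerteLargerNets}), the characteristic map $\Psi$ is well-defined as the spline surface ring with control mesh $\affr C$; by rotation symmetry, $\Psi$ restricted to one segment determines the whole map via the rotation $z\mapsto e^{\I 2\pi/m}z$, and by reflection symmetry each segment is mirror-symmetric about its central spoke. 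Thus it suffices to analyze $\Psi$ on a single half-segment.

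The core of the argument is to show that on the first half-segment the characteristic map is a regular, injective map whose image lies in the cone $C(\affr C)$ between spokes $S_0$ and $S_{0.5}$, with the two spoke-boundaries mapped monotonically onto the two rays. For this I would use the standard characteristic-map technique of Reif: it is enough to prove that the partial derivatives $\Psi_u$ and $\Psi_v$ have the property that, say, $\Psi_u$ stays in an open half-plane and the map is monotone along the two families of parameter lines. Concretely, I would pass to the control net of the derivative (difference) surface: the differences of consecutive control points of $\affr C$ along each of the two grid directions form a new spline control net, and by the positivity of the coordinates of $H(\affr C)$ in $\koord K$, together with the projection-preserves-inequalities observations from Lemma~\ref{LEMMA:MIN-MAX-Beziehung} and Lemma~\ref{LEMMA:VergleichbareEigenschaftenZweierSymmetrischenNetzen}, these difference control points all lie strictly inside a fixed open half-plane (one coordinate of a fixed sign). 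Because B-spline patches are contained in the convex hull of their control points, this forces the corresponding partial derivative of $\Psi$ to lie in that open half-plane everywhere, which gives both local regularity (the two derivative vectors cannot be parallel, as they live in transverse half-planes dictated by the two grid directions and the segment geometry) and global injectivity on the half-segment (monotonicity in each parameter direction, plus the boundary behavior on the spokes).

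Assembling the pieces: regularity follows because at every parameter value $\Psi_u$ and $\Psi_v$ lie in two half-planes whose intersection of boundary directions is a single ray, so they are linearly independent; injectivity on one segment follows from the half-plane monotonicity (a standard argument: a map with $\Psi_u$ in an open half-plane is injective on each parameter line, and controlling the two spoke boundaries — which by the symmetry lie exactly on $S_0$ and $S_{0.5}$ and are traversed monotonically — closes the argument via a winding-number or boundary-degree count); and then global injectivity and surjectivity onto a punctured neighborhood follow by fitting the $m$ rotated copies together along the spokes, using that the image of each segment is confined to its cone $C$ so distinct segments meet only along shared spokes. Surjectivity onto the $m$-cone annulus (hence the characteristic map being a valid "characteristic map" in Reif's sense) then follows from a degree argument on the boundary circles.

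The main obstacle I anticipate is the step that the difference (derivative) control nets genuinely lie in a \emph{fixed open} half-plane with a uniform sign — i.e. that strict positivity of the control points of $\affr C$ in $\koord K$ propagates to strict sign-definiteness of \emph{all} relevant first differences, including across the spoke where two segments meet and at the innermost ring adjacent to the extraordinary point. Near the extraordinary point the grid structure degenerates, so one must check the irregular difference stencils by hand and verify that the positivity margin (guaranteed abstractly by the $\mathrm{MIN}/\mathrm{MAX}$ bound $\nu$ of Lemma~\ref{LEMMA:MIN-MAX-Beziehung}) does not get destroyed. A secondary technical point is ensuring the argument is uniform in the degree $n$: the number of polynomial patches per subsegment grows like $\lfloor n/2\rfloor^2$, so one wants the half-plane containment to hold at the level of control nets (degree-independent reasoning) rather than patch by patch, which is exactly why the convex-hull property of B-spline patches is the right tool here.
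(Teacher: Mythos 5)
Your overall strategy (reduce to one segment by symmetry, control the derivatives through difference control nets and the convex-hull property) is in the right spirit, but two of your key steps do not hold as stated. First, you deduce that the difference control points lie in a fixed open half-plane ``by the positivity of the coordinates of $H(\affr C)$ in $\koord K$''. Positivity of the control points says nothing about the sign of their consecutive differences: from $\v p_{i,j+1}^0 > \v 0$ and $\v p_{i,j}^0 > \v 0$ one cannot conclude any sign for $\v p_{i,j+1}^0 - \v p_{i,j}^0$, and Lemmas~\ref{LEMMA:MIN-MAX-Beziehung} and \ref{LEMMA:VergleichbareEigenschaftenZweierSymmetrischenNetzen} compare points, not differences. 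The paper closes exactly this gap with a separate invariance argument on the \emph{edge directions}: refining and averaging act on the edge directions by their own non-negative masks, so the set $\affr E_k$ of edge directions of the first segment of $\affr M_k$ (augmented by the spoke directions $\v u_1$ and $\I\v u_0$) stays in the cone spanned by $\{\v u_1, \I\v u_0\}$ for every $k$ and hence in the limit for $\affr C$; strictness (no zero edge direction) is then propagated by an influence argument using that $\affr C_{0..1}$ has at most one zero control point. Without some such argument your half-plane (in fact the paper needs a quadrant) containment is unsupported.

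Second, your regularity step --- ``$\Psi_u$ and $\Psi_v$ lie in two half-planes whose intersection of boundary directions is a single ray, so they are linearly independent'' --- is not valid: two distinct open half-planes through the origin intersect in a nonempty open cone, so a vector from each half-plane can perfectly well be parallel to the other. A separate half-plane condition on each derivative does not yield regularity. The paper avoids this by proving the sharper statement that the \emph{single} partial derivative $\v c_v$ of one (suitably rotated) segment lies in the open quadrant $Q=\{x+\I y \mid x,y>0\}$, and then invoking \cite[Theorem 5.25]{PR2008}, which derives both regularity and injectivity of the entire characteristic map from that one quadrant condition (the other derivative being controlled through the rotational symmetry relating adjacent segments). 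Your winding-number sketch for injectivity is essentially a re-derivation of that cited criterion, but as written it rests on the two unsupported steps above, so the proposal has genuine gaps rather than being a complete alternative route.
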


\FigureCenter{!h}{Abb:SplineSegment}
  {\includegraphics{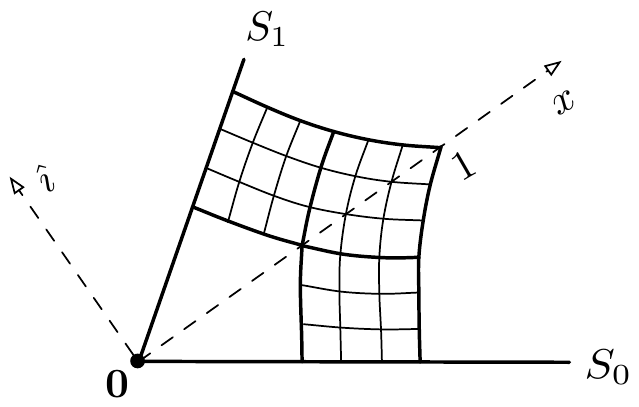}}
  {One segment and its $3\cdot 9$ polynomial patches for $n=6, 7$.}

\begin{proof}
{
Let $\Omega = [0, 2]^2 \backslash [0, 1)^2$ and let $\v c(u,v): \Omega \to \mathbb C$ be the first segment of the characteristic map rotated such that $\v c(1,1)>0$. According to \cite[Theorem 5.25 on Page 107]{PR2008}, the characteristic map is regular and injective if all partial derivatives $\v c_v(u, v)$ lie in
\[Q = \{x+\I y | x, y>0\}\;.\]
To show this, we consider a  rotated grid mesh $\affr M$ such that the subdivided and normalized meshes $\affr M_k = (\M M_n^k \affr M)_{0..\rho}/\Vert (\M M_n^k \affr M)_{0..\rho}\Vert$ converge to the characteristic mesh $\affr C$.

Let $\affr E_k$ and $\affr E$ be the sets of all edge directions $\v p_{i, j+1}^0 - \v p_{i, j}^0$ in the first segment of $\affr M_k$ and $\affr C$, respectively. These and other edge directions control the  partial derivatives $\v c_v$ of the first segment. Furthermore, we add both $\v u_1$ and $\I \v u_0$ to $\affr E_k$ and $\affr E$, where $\v u_0$ and $\v u_1$ are the edge directions of the spokes $S_0$ and $S_1$, respectively. Refining and averaging a mesh also means to average its edge directions by the masks shown in Figure~\ref{Abb:MaskEdges}.

\FigureCenter{h}{Abb:MaskEdges}
  {\includegraphics{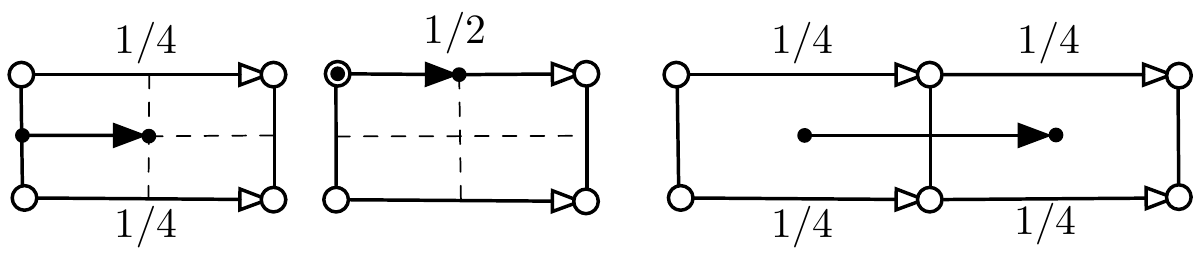}}
  {Masks for the edge directions.}

In particular, the directions in $\affr E_k$ are either, due to symmetry, parallel to $\v u_1$ and $\I \v u_0$, or obtained by iteratively averaging the directions in $\affr E_{k-1}$ and multiplying these by positive numbers because of the normalization. Thus, we know that $\affr E_k$ lies in the cone spanned by $\affr E_{k-1}$, i.\,e., the cone spanned by $\{\v u_1, \I \v u_0\}$, and therefore, both $\affr E_k$ and $\affr E$ lie in the larger set $Q \cup \{0\}$.

Moreover, since $\affr C_{0..1}$ is symmetric and has at most one zero control point, at least one of its edge directions is non-zero. Subdividing $\affr C$, we can see that every element of $\affr E$ is a linear combination of $\affr E$ with non-negative weights and a positive weight for the non-zero element in the $1$-ringnet. Hence $\affr E$ has no zero elements and lies in $Q$.

A partial derivative $\v c_v(\v u), \v u \in [0, 2]\times[1, 2]$ (or $\v u \in [1, 2]\times[0, 2]$) is a convex combination of directions in $\affr E$ or in $\affr E$ reflected at $S_1$ (or at $\I S_0$), where a reflected direction has a smaller weight than its unreflected counterpart. Hence, $\v c_v(\Omega) \subset Q$ as claimed.
}
\end{proof}

Theorems~\ref{SATZ:DominanteEigenwerteLargerNets} and \ref{SATZ:RegInjOfCharMap} show that Reif's $C^1$-criterion \cite[Theorem 3.6]{Reif95} is satisfied. 
So finally, we have established
\begin{SATZ}\label{SATZ:C1Mn}
\textbf{\emph{($C^1$-property of $\M M_n$)}}\\
The midpoint scheme $\M M_n$ is a $C^1$-sub\-division algorithm for all degrees $n\geq 2$ and for all valencies $m\geq 3$.
\end{SATZ}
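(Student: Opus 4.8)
The plan is to invoke Reif's $C^1$-criterion \cite[Theorem 3.6]{Reif95}, which reduces $C^1$-continuity at an extraordinary point to two conditions on the subdivision matrix $\M S_\rho$: first, that the subdominant eigenvalue $\lambda$ is real, positive, strictly less than the dominant eigenvalue $1$, and has algebraic multiplicity exactly $2$; and second, that the associated characteristic map is regular and injective. Both ingredients have already been assembled in the preceding sections, so the body of the proof is short and consists mainly of citing them in the right order.

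First I would recall that, by Theorem~\ref{SATZ:DominanteEigenwerteLargerNets}, applied to $j=\rho$ (which is admissible since $\rho=\lceil \tfrac32 n-\tfrac32\rceil\ge\omega+1$ for $n\ge 2$), the subdivision matrix $\M S_\rho$ has dominant eigenvalue $1$ with algebraic multiplicity $1$ (Theorem~\ref{SATZ:C0Mn}) and subdominant eigenvalue $\lambda_{2\pi/m}$, which is real and positive by Corollary~\ref{FOLGERUNG:KonvergenzMk}\,(a) and Lemma~\ref{LEMMA:EigenwertVSymmEigennetz}, strictly smaller than $1$ by Inequality~\ref{EQ:Lambda}, and of algebraic and geometric multiplicity exactly $2$. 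Then I would invoke Theorem~\ref{SATZ:CharMesh}, which produces the characteristic mesh as a genuine positive subdominant eigennet, and Theorem~\ref{SATZ:RegInjOfCharMap}, which shows that the resulting characteristic map is regular and injective for $n\ge 2$. Together these verify every hypothesis of Reif's criterion.

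Finally I would note that away from the single extraordinary point the limit surface is a uniform tensor-product spline of degree $n$ (since $\M M_n$ restricts to the Lane--Riesenfeld algorithm on the regular part of the mesh) and hence is $C^{n-1}$, in particular $C^1$; and that the reduction at the start of Section~\ref{SECTION:Rings} to a mesh with a single irregular vertex (for odd $n$) or single irregular face (for even $n$) is without loss of generality, because after one subdivision step the extraordinary elements are isolated and can be analyzed independently. Combining the spline smoothness on the regular part with Reif's criterion at each extraordinary point yields that $\M M_n$ is a $C^1$-subdivision algorithm for all $n\ge 2$ and all valencies $m\ge 3$.

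The only genuinely substantive steps are the ones already carried out in Sections~\ref{SECTION:Subdominance}--\ref{SECTION:CharacteristicMap}; within this final proof there is essentially no obstacle, the work being purely a matter of checking that the hypotheses of \cite[Theorem 3.6]{Reif95} match the conclusions of Theorems~\ref{SATZ:DominanteEigenwerteLargerNets} and \ref{SATZ:RegInjOfCharMap}. If there is a subtlety to watch, it is making sure that the particular normalization and rotation of the grid mesh used in Theorem~\ref{SATZ:RegInjOfCharMap} is compatible with the eigenstructure statement of Theorem~\ref{SATZ:DominanteEigenwerteLargerNets}, i.e. that the same eigennet $\affr M_\infty$ underlies both — but this consistency was built in by construction via Theorem~\ref{SATZ:CharMesh}.
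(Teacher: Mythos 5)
Your proposal is correct and follows essentially the same route as the paper, whose entire proof is the observation that Theorems~\ref{SATZ:DominanteEigenwerteLargerNets} and \ref{SATZ:RegInjOfCharMap} together verify the hypotheses of Reif's $C^1$-criterion. The extra remarks you add (spline smoothness away from the extraordinary point, and the reduction to a single irregular element) are consistent with what the paper disposes of in the introduction and at the start of Section~\ref{SECTION:Rings}.
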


\addcontentsline{toc}{section}{References}
\bibliographystyle{alpha}
\bibliography{./arXiv_AnalyzingMidpointSubdivision}







\end{document}